\newtheorem{conjecture}{Conjecture}
\newtheorem{definition}{Definition}
\newtheorem{problem}{Problem}
\newtheorem{corollary}{Corollary}
\newtheorem{theorem}{Theorem}
\newtheorem{result}{Result}
\newcommand{\C}{\mathcal{C}}
\newcommand{\V}{\mathcal{V}}
\newcommand{\s}{\mathcal{S}}
\newcommand{\vpr}{v^\prime}
\newcommand{\Rplus}{\mathcal{R}_{\geq 0}}
\newcommand{\cpr}{c^\prime}
\newcommand{\apr}{a^\prime}
\newcommand{\bpr}{b^\prime}
\newcommand{\E}{\mathbb{E}}
\newtheorem{lemma}{Lemma}
\newtheorem{lp}{Linear Program}
\newtheorem{example}{Example}
\newcommand{\dpr}{d^\prime}
\newcommand{\dist}{\mathrm{dist}}
\newcommand{\fratio}{\mathrm{fairness}}
\newcommand{\OPTdet}{\mathrm{OPT_{det}}}
\newcommand{\OPTrand}{\mathrm{OPT_{rand}}}
\begin{document}

\title{Metric Distortion of Social Choice Rules: Lower Bounds and Fairness Properties}
\author{Ashish Goel\footnote{Management Science and Engineering, Stanford University}, Anilesh K. Krishnaswamy\footnote{Electrical Engineering, Stanford University} and Kamesh Munagala\footnote{Computer Science, Duke University}
}
\maketitle
\begin{abstract}
 We study social choice rules under the utilitarian distortion framework, with an additional metric assumption on the agents' costs over the alternatives. In this approach, these costs are given by an underlying metric on the set of all agents plus alternatives. Social choice rules have access to only the ordinal preferences of agents but not the latent cardinal costs that induce them. Distortion is then defined as the ratio between the social cost (typically the sum of agent costs) of the alternative chosen by the mechanism at hand, and that of the optimal alternative chosen by an omniscient algorithm. The worst-case distortion of a social choice rule is, therefore, a measure of how close it always gets to the optimal alternative without any knowledge of the underlying costs. Under this model, it has been conjectured that Ranked Pairs, the well-known weighted-tournament rule, achieves a distortion of at most 3 \citep{anshelevich2015approximating}. We disprove this conjecture by constructing a sequence of instances which shows that the worst-case distortion of Ranked Pairs is at least 5. Our lower bound on the worst case distortion of Ranked Pairs matches a previously known upper bound for the Copeland rule, proving that in the worst case, the simpler Copeland rule is at least as good as Ranked Pairs. And as long as we are limited to (weighted or unweighted) tournament rules, we demonstrate that randomization cannot help achieve an expected worst-case distortion of less than 3. Using the concept of approximate majorization within the distortion framework, we prove that Copeland and Randomized Dictatorship achieve low constant factor fairness-ratios (5 and 3 respectively), which is a considerable generalization of similar results for the sum of costs and single largest cost objectives. In addition to all of the above, we outline several interesting directions for further research in this space.
\end{abstract}

\section{Introduction}\label{sec:intro}
Social choice theory is the science of aggregating the varied preferences of multiple agents into a single collective decision. Ways of doing this aggregation are called social choice rules -- functions that map the given preferences of agents, typically in the form of total orderings over a set of alternatives, to a single alternative. The conventional approach to reasoning about the quality of outcomes obtained from these rules has been a normative, axiomatic one. A variety of axiomatic criteria, corresponding to naturally desirable properties, have been proposed, and a great deal of work has been done to understand which axioms can or cannot be satisfied together, and how the known social choice rules measure up against them. For instance, a few celebrated results \citep{gibbard1973manipulation,satterthwaite1975strategy} rule out the concurrent satisfiability of such basic axioms, and additional spatial assumptions that help sidestep these impossibilities have been identified \citep{moulin1980strategy,barbera2001introduction}.

Another approach, which has received a great deal of attention lately \citep{procaccia2006distortion,caragiannis2011voting,boutilier2015optimal}, is to assume a \emph{utilitarian} view, as is commonplace in economics and algorithmic mechanism design. Every agent has latent cardinal preferences over the alternatives in terms of utility (or cost) and the social utility of an alternative is a function of the agents' utilities. The most commonly used objective is the total sum of agent utilities for an alternative. Social choice rules are then viewed as approximation algorithms which try to choose the best alternative given access only to ordinal preferences. Similar to the competitive ratio of online approximation algorithms, the quantity of interest here is the worst-case value (over all possible underlying utilities) of the \emph{distortion} -- the ratio  of the social utility of the truly optimal alternative over that of the alternative chosen by the social choice rule at hand \citep{procaccia2006distortion}. 

Characterizing the worst-case distortion of social choice functions has recently emerged as the central question within the utilitarian approach to social choice. Without any assumptions on the utilities, the distortion of deterministic social choice rules is unbounded \citep{procaccia2006distortion}, and that of randomized social choice rules is $\Omega(m)$, where $m$ is the number of alternatives \citep{boutilier2015optimal}.

Interestingly, some constant factor bounds on the distortion of social choice rules are made possible with an additional \emph{metric} assumption on the cardinal preferences of agents, represented by their costs with respect to alternatives \citep{anshelevich2015approximating}. These costs are assumed to form an unknown, arbitrary metric space, and distortion is redefined in terms of these costs. In this setting, the best known positive result for deterministic social rules is that the distortion of the Copeland rule, a tournament function, is at most $5$ \citep{anshelevich2015approximating}. It is known that the worst-case distortion of {\em any} deterministic social choice rule is at least $3$, and that Ranked Pairs, a weighted-tournament function, achieves this lower bound given some assumptions on the ordinal preferences of agents. It is also known that the distortion of {\em any} randomized rule is at least $2$, and that of Randomized Dictatorship is at most $3$ \citep{anshelevich2016randomized}, showing that randomization helps beat the performance of common deterministic rules with respect to worst-case distortion.

An important open question here is whether the worst-case distortion of Ranked Pairs is indeed $3$, as has been conjectured \citep{anshelevich2015approximating}. And given that tournament (or weighted-tournament rules) provide the best known bounds in the deterministic case, another interesting question is whether they perform well in the randomized case too, and in particular, better than Randomized Dictatorship. The first half of this paper is devoted to settling these questions, providing answers in the negative for both. The proof for Ranked Pairs is particularly surprising and
intricate.

Under the utilitarian metric distortion approach, in addition to
  reasoning about the total social cost, it is natural to ask how ``fair" choosing a particular alternative
  is in terms of the cost incurred by the agents. For example, let us
  say there are two agents and two alternatives, and the costs
  incurred by the agents are $x_1,x_2$ for the first alternative, and
  $(x_1+x_2)/2,(x_1+x_2)/2$ for the second. It seems natural that the
  second is more ``fair" than the first. Various notions of fairness
  such as lexicographic fairness\footnote{This is also called max-min
    fairness in the resource allocation literature.}, prefix-based
  measures, and the more general approximate majorization measure,
  have been studied in the context of routing, bandwidth allocation
  and load balancing problems \citep{kleinberg1999fairness,kumar2000fairness,goel2001approximate}. Other measures of fairness such as envy-freeness \citep{chen2013truth,caragiannis2009low}, maximin-shares \citep{procaccia2014fair}, and leximin \citep{barbara1988maximin,rawls2009theory} have also been studied in the context of mechanism design and social choice. We could also look at objectives given by $\alpha$-percentiles ($\alpha = 0.5$ is
  the median) for $\alpha \in [0,1]$. While there are positive results
  (as achieved by Copeland) for $\alpha \in [0.5,1)$, it is known that
  for $\alpha \in [0,0.5)$, all deterministic social choice rules have
  unbounded worst-case distortion
  \citep{anshelevich2015approximating}.
 
  A question arises as to which of the above notions of fairness can
  be adapted to the metric distortion framework, and moreover yields
  meaningful results. Even if percentiles were the the most appropriate, they are too strong for this domain given the lower bounds \citep{anshelevich2015approximating} mentioned above; in fact, there are very few
  resource allocation settings known where simultaneous maximization
  or approximation of all percentiles is possible. Notions from
  cake-cutting such as envy-freeness and maximin shares do not apply
  to social choice settings, since these definitions assume a
  partition of available goods among agents; in social choice
  settings, we make a single societal decision rather than partition
  resources. We will instead look at approximate majorization, which
  attempts to minimize, simultaneously for all $k$, the sum of the $k$-largest costs incurred by agents. This generalizes both
  lexicographic fairness and total cost minimization, and applies to
  any setting where agents receive utilities (or costs), regardless of whether
  the underlying problem is one of cake-cutting, resource
  allocation, or social choice. Further, an
  approximate majorization ratio of $\alpha$ guarantees an
  $\alpha$-approximation of a large class of fairness measures,
  including all $p$-moments of costs for agents, for $p \ge 1$. The
  second half of this paper states and proves these fairness results. In particular, we show how the simple Copeland rule approximates a broad class of convex cost functions.

\begin{figure}
  \centering
  \def\svgwidth{\columnwidth}
  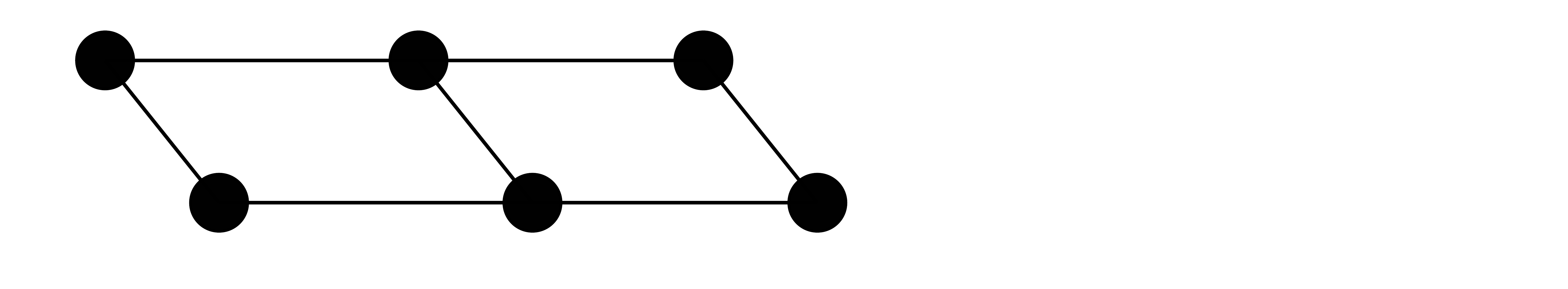
  \caption{Underlying shortest path metric in Example 1 -- each edge is of unit weight}
  \label{fig:graphmetric}
\end{figure}

\paragraph{Warm-up example of metric costs and distortion}
Imagine there are three alternatives $c_1,c_2,c_3$, and three agents $v_1,v_2,v_3$ with preferences $c_1 \succ c_2 \succ c_3$, $c_2 \succ c_3 \succ c_1$ and $c_3 \succ  c_1 \succ c_2$ respectively. The underlying costs are given by the shortest path metric on the graph in Figure \ref{fig:graphmetric}. The agent costs are given by $d(.,.)$ as follows:

\begin{enumerate}[(i)]
\item $d(v_1,c_1) = d(v_1,c_2) = d(v_1,c_3) = 1$; 
\item $d(v_2,c_2) = d(v_2,c_3) = 1$, $d(v_2,c_1) = 3$; 
\item $d(v_3,c_3) = 0$, $d(v_3,c_1) = d(v_3,c_2) = 2$.
\end{enumerate} 

 Let's say that a deterministic rule chooses $c_1$ as the winner based on the preferences. Then the distortion is given by 
${\sum_{i=1,2,3}d(v_i,c_1)}/{\sum_{i=1,2,3}d(v_i,c_3)} = {6}/{2} = 3$. In fact, this metric achieves the worst-case distortion among all possible metrics that agree with the given preferences. 

If a randomized rule picks each of the three alternatives with equal probability, the expected distortion will be equal to $(\frac{1}{3}.6 + \frac{1}{3}.4 + \frac{1}{3}.2) / 2 = 2$. It can be seen, based on symmetry, that this distribution minimizes the worst-case distortion over all possible metrics that agree with the preferences.

Let us also look at the fairness ratio when we pick $c_1$ as the winner. The largest cost for $c_1$ is that of $v_2$ with $d(v_2,c_1) = 3$. And for the optimal alternative $c_3$, the largest cost is $d(v_1,c_3) = 1$. The ratio of these values is $3$. Similarly if we look at the ratio of the two largest costs, then we have $[d(v_2,c_1)+d(v_3,c_1)]/[d(v_1,c_3)+d(v_2,c_3)] = 5/2$. For the three largest costs, we have a ratio of $3$ from above. Therefore, the alternative $c_1$ achieves a fairness ratio of $\max\{3,2.5,3\} = 3$.

\subsection{Our contributions}
Our first set of results are in the negative: we show that social choice rules of simple forms cannot have worst-case distortion ratios matching the known lower bounds. Our second set of results concern defining fairness in this setting, and upper bounding the fairness ratios of natural social choice rules, in particular the Copeland rule.

\paragraph{Lower Bounds on Distortion}
It has been conjectured that the simple Ranked Pairs rule achieves the optimal distortion ratio of $3$ \citep{anshelevich2015approximating}. This conjecture is based on the fact that if the ordinal preferences of agents are restricted to be of a certain form, Ranked Pairs does indeed have a distortion at most $3$ (see Theorem \ref{thm:anshel}). Our first main result is disproving this conjecture -- we show that Ranked Pairs, and the related Schulze rule, have a worst-case distortion ratio at least $5$, and in that sense are no better than the Copeland rule when the preferences are general. We do this by constructing a sequence of instances where the agent preferences are obtained by coupling cyclic permutations of two equally large sets of alternatives in a one to one fashion.

\begin{result}[Theorem \ref{thm:rpbound}]
Ranked Pairs, and the Schulze rule, have a worst-case distortion ratio of at least 5.
\end{result}

As stated before, a lower bound of $2$ is known on the worst-case distortion of any randomized rule \citep{anshelevich2016randomized}. We show that this lower bound cannot be achieved by {\em any} rule that looks at only the pairwise wins/losses among the alternatives, or the margins of these wins/losses (tournament rules and weighted-tournament rules -- see Section \ref{subsec:tourn-rules} for a definition).
\begin{result}[Theorem \ref{thm:rand-tourney}]
The worst-case (expected) distortion of any tournament or weighted-tournament rule is at least 3. 
\end{result}


\paragraph{Fairness properties}
We introduce a method of quantifying the ``fairness'' of social choice rules by incorporating the concept of approximate majorization \citep{goel2006simultaneous} within the \emph{metric distortion} framework. For this purpose, we redefine the social cost of any alternative as the sum of its $k$ largest agent costs. How fair a given social choice rule is depends on how the alternative it chooses performs on this objective compared to every other alternative. To evaluate the fairness of a social choice rule, we then seek to bound the distortion ratio of this objective \emph{simultaneously} over all possible values of $k$:  we call this the \emph{fairness ratio} (we define this measure formally in the next section). 

The fairness ratio generalizes both the sum of costs objective (utilitarianism), and maxmin fairness (egalitarianism). Given such a strong definition, it is impossible to achieve a constant fairness ratio in many settings, and surprisingly, for the metric distortion problem we study in this paper, simple social choice rules like Copeland and Randomized Dictatorship achieve small constant fairness ratios that match the best known distortion bounds for just the sum objective.

\begin{result}[Theorem \ref{thm:copeland-fairness}]
Copeland rule achieves a fairness ratio of at most 5.
\end{result}

\begin{result}[Theorem \ref{thm:rand-dict-fairness}]
Randomized Dictatorship achieves a fairness ratio of at most 3.
\end{result}

Additionally, for deterministic rules, a bound on the fairness ratio translates to an approximation result for a general class of symmetric convex objectives (see Section \ref{subsec:prelims-fairness}). And therefore, the above result leads us to the surprising observation that, assuming metric costs, the simple Copeland rule simultaneously approximates a very broad class of cost functions.

\paragraph{Conjectures and open problems} There are many directions for further research on the metric distortion problem. We mention some of these in context as we go along (Conjectures \ref{conj:det-tourney}, \ref{conj:detrandbound}). More details can be found in the Appendix in Sections \ref{subsec:inst-opt-dist} and \ref{subsec:metriccandidate}. For example, for randomized rules, we suggest an interesting variation of the measure of distortion, one that is stronger adversarially than the standard measure (Section \ref{subsec:metriccandidate}).

\subsection{Related Literature}
Several interesting problems pertaining to the distortion arising from the mapping of cardinal preferences to ordinal information have been studied \citep{moulin2016handbook}. The worst case distortion of social choice rules, with unrestricted or normalized utilities, is known to be unbounded \citep{procaccia2006distortion}. With randomized mechanisms, it is possible to achieve a distortion of $\Omega(\sqrt{m} \log^* m)$, where $m$ is the number of alternatives \citep{caragiannis2011voting}. The standard assumption here is that agents translate cardinal scores into ordinal preferences in the straightforward way -- the alternative with the $k$-th highest utility is placed in the $k$-th position. If this mapping could be done in any other way, it is possible to construct low distortion embeddings of cardinal preferences into (ordinal) social choice rules like plurality \citep{caragiannis2011voting}. Another interesting result here is that is possible to construct a truthful-in-expectation mechanism whose worst-case distortion is $O(m^{3/4})$ \citep{filos2014truthful}.
  
  Analysis of cardinal preferences under spatial models of proximity has had a long history in social choice \citep{enelow1984spatial,moulin1980strategy}. Such models, especially those with euclidean spaces, have also been commonly studied in the approximation algorithms literature on facility location problems \citep{arya2004local,drezner1995facility}. In these models, the cost of an agent for an alternative is given by the distance between the two.
 As mentioned earlier, our work follows the literature on the analysis of distortion of social choice rules under the assumption that agent costs form an unknown metric space \citep{anshelevich2015approximating,anshelevich2016randomized}. We have already mentioned that several lower and upper bounds for both the sum of costs and percentile objectives are known in this setting. In addition, it is known that a distortion of at most $4$ for the median objective can be achieved by a randomized mechanism that chooses from the uncovered set \citep{anshelevich2016randomized}. 
 
 It is important to note that in the special case of euclidean metrics, it possible to design low distortion mechanisms, with the additional constraint of their being truthful-in-expectation \citep{feldman2016voting}. Additionally, the metric distortion framework has also been used to study other problems such as finding an approximate maximum weight matching with access to only ordinal preferences \citep{anshelevich2016blind}.
 
 In the distortion framework, both the interpersonal comparison of utilities, and the goal of utility maximization, are implicitly assumed to be valid. While the interpersonal comparison of utilities is more meaningful in some contexts than others \citep{boutilier2015optimal}, we take it for granted. 
 
 While the results on the distortion of the sum of costs (or utilities) objective are extremely interesting, minimization of total cost (or maximization of total utility) is not the only imaginable goal of social choice mechanisms. The first step toward other understanding the distortion of other objectives is apparent in the various results on the distortion of the median cost objective \citep{anshelevich2015approximating,anshelevich2016randomized}. We take a further step in this direction by drawing on the notions of fairness that have been studied in the context of of network problems such as bandwidth allocation and load balancing \citep{kleinberg1999fairness,kumar2000fairness,goel2001approximate}.

\section{Preliminaries}
\subsection{Social Choice Rules} Let $\V$ be the set of agents and $\C$ the set of alternatives. We will use $N$ to denote the total number of agents, i.e., $N = |\V|$. Every agent $v \in \V$ has a strict (no ties) preference ordering $\sigma_v$ on $\C$. For any $c,\cpr \in \C$, we will use $c \succ_v \cpr$ to denote the fact that agent $v \in \V$ \emph{prefers} $c$ over $\cpr$ in her ordering $\sigma_v$. Let $\s$ be the set of all possible preference orderings on $\V$. We call a profile of preference orderings $\sigma \in \s^N$ as an \emph{instance}. 

Based on the preferences of agents, we want to determine a single alternative as the winner, or a distribution over the alternatives and pick a winner according to it. A deterministic social choice rule is a function $f: \s^N \to \C$ that maps each instance to an alternative. A randomized social choice rule is a function $g: \s^N \to \Delta(\C)$, where $\Delta(\C)$ is the space of all probability distributions over the set of alternatives $\C$. 

To define the social choice rules that we use in this paper, we need a few additional definitions. An alternative $c$ \emph{pairwise-beats} $\cpr$ if $|\{v \in \V : c \succ_v \cpr\}| \geq \frac{N}{2}$, with ties broken arbitrarily. Given an instance $\sigma$, a complete weighted digraph $G_t(\sigma)$ with $\C$ as the set of nodes, and the weight of any edge $c \to \cpr$ given by $w(c,\cpr) = |\{v \in \V : c \succ_v \cpr \}|$, is called the \emph{weighted-tournament} graph induced by $\sigma$. An unweighted digraph $G_m(\sigma)$ with $\C$ as the set of nodes such that an edge from $c \to \cpr$ exists iff $c$ pairwise beats $\cpr$ is called the \emph{tournament} graph induced by $\sigma$. 

\begin{itemize}
\item \textbf{Ranked Pairs}: Given an instance $\sigma$, sort the edges of the weighted-tournament graph $G_t(\sigma)$ according to the values $w(.,.)$ in some non-decreasing order (breaking ties arbitrarily). Start with a graph $G = (\C,\emptyset)$ and iterate over the edges in the order determined above. At each step, add the edge to $G$ if it does not create a cycle, and discard the edge otherwise. The winning alternative is the source node of the resulting directed acyclic graph.
\item \textbf{Copeland}: Given an instance $\sigma$, define a score for each $c \in \C$ given by $|\{ \cpr \in \C : c \mbox{ pairwise beats } \cpr \}|$. The alternative with the highest score (the largest number of pairwise victories) is chosen to be the winner. In other words, the winning alternative is the node in the tournament graph $G_m(\sigma)$ with the maximum out-degree (breaking ties arbitrarily).
\item \textbf{Randomized Dictatorship}: Choose alternative $c \in \C$ with probability $p(c)$ equal to ${|V_c|}/{N}$ where $V_c = \{ v \in \V : c \succ_v \cpr, \; \forall \cpr \neq c\}$.
\item \textbf{Schulze} \citep{schulze2003new} In the weighted-tournament graph, a path of strength $p$ from alternative $c$ to alternative $\cpr$ is a sequence of candidates $c_1, c_2, \ldots, c_n$ with the following properties:
\begin{inparaenum}[(i)]\item $c_1 = c$ and $c_n = \cpr$,
\item for all $i=1,\ldots ,(n-1)$, $w(c_i,c_{i+1}) \geq w(c_{i+1},c_{i})$, and \item for all $i=1,\ldots ,(n-1)$, $w(c_i,c_{i+1}) \geq p$.
\end{inparaenum}

Let $p(c,\cpr)$ be the strength of the strongest path from $c$ to $\cpr$. If there is no path from $c$ to $\cpr$, then $p(c,\cpr) = 0$.

Define a relation $\succ^\star$ as follows: $\forall c,\cpr$, $c \succ^\star \cpr \iff p(c,\cpr) > p(\cpr,c)$. It can be proven that $\succ^\star$ defines a transitive relation. The alternative (with arbitrary tie-breaking, as there may be many such) $c^\star$, such that $p(c^\star,c) \geq p(c,c^\star)$ for all other alternatives $c$, is chosen as the winner.
\end{itemize}

\subsection{Tournament and weighted-tournament rules} \label{subsec:tourn-rules} Any social choice rule that chooses an alternative, or a distribution over the alternatives, based on just the tournament graph is called a tournament rule \citep{moulin2016handbook}. These are also called C1 functions according to Fishburn's classification \citep{fishburn1977condorcet}. Any rule that is a function of the weighted-tournament graph is a \emph{weighted-tournament} rule, as long as it is not a tournament rule \citep{moulin2016handbook}. According to Fishburn's classification, these rules are also called C2 functions \citep{fishburn1977condorcet}. Such rules do not need knowledge of all the preferences orderings, just the aggregated information in terms of the tournament/weighted-tournament graph. From the above definitions, we see that Ranked Pairs is a deterministic weighted-tournament rule, and Copeland a tournament rule. Randomized Dictatorship is neither a tournament rule nor a weighted-tournament rule, because it needs to know which alternative is first in each ordering.

\subsection{Metric costs} We assume that the agent costs over the alternatives is given by an underlying metric $d$ on $\C \cup \V$. $d(v,c)$ is the cost incurred by agent $v$ when alternative $c$ is chosen as the winning alternative. 
\begin{definition} \label{def:metric}
A function $d : \C \cup \V \times \C \cup \V \to \Rplus$ is a metric iff $\forall x,y,z \in \C \cup \V$, we have the following:
\begin{inparaenum}
\item $d(x,y) \geq 0$,
\item $d(x,x) = 0$,
\item $d(x,y) = d(y,x)$ , and
\item $d(x,z) \leq d(x,y) + d(y,z)$.\label{eqn:tri-ineq}
\end{inparaenum}
\end{definition}

We can do with a much simpler yet equivalent assumption on the agents' costs (see Lemma \ref{thm:tri-eq-quad}). We need to first define a q-metric (``q" for quadrilateral) by replacing the triangle inequalities by ``quadrilateral" inequalities (Condition \ref{eqn:quad-ineq} in the definition below).
\begin{definition} \label{def:qmetric}
A function $d : \V \times \C \to \Rplus$ is a q-metric iff $\forall v,\vpr \in \V$, and $\forall c, \cpr \in \C$, we have the following:
\begin{enumerate}
\item $d(v,c) \geq 0$
\item $d(v,c) \leq d(v,\cpr) + d(\vpr,\cpr) + d(\vpr,c)$ \label{eqn:quad-ineq}
\end{enumerate}
\end{definition}

The following equivalence result could be of independent interest in problems involving metrics. We make heavy use of it in later sections to prove our results.
\begin{lemma}\label{thm:tri-eq-quad}
If $d$ is a q-metric, then there exists a metric $\dpr$ such that $d(v,c) = \dpr(v,c)$ for all $v \in \V$ and $c \in \C$.
\end{lemma}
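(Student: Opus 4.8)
The plan is to realize the q-metric as the shortest-path metric of a weighted bipartite graph on $\V \cup \C$, and then to show that the quadrilateral inequality (Condition \ref{eqn:quad-ineq}) is exactly what guarantees the direct agent–alternative edges are never beaten by a detour.

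First I would build an auxiliary graph $H$: take the complete bipartite graph with parts $\V$ and $\C$, and assign to each edge $\{v,c\}$ the weight $d(v,c)$, with no edges inside $\V$ or inside $\C$. For any two points $x,y \in \V \cup \C$, define $\dpr(x,y)$ to be the length of the shortest walk between $x$ and $y$ in $H$ (the minimum total edge weight over all walks). Since $H$ is connected (assuming $\V,\C$ nonempty) and the weights are finite and non-negative, $\dpr$ is well-defined and finite. Note that the definition already assigns values to the within-$\V$ and within-$\C$ pairs that the original q-metric left unspecified, e.g.\ $\dpr(v,\vpr) = \min_{c} [d(v,c)+d(\vpr,c)]$.

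Next I would verify that $\dpr$ is a metric in the sense of Definition \ref{def:metric}. Non-negativity and $\dpr(x,x)=0$ are immediate; symmetry holds because $H$ is undirected and each $d(v,c)$ is a single symmetric edge weight; and the triangle inequality $\dpr(x,z) \le \dpr(x,y)+\dpr(y,z)$ holds because concatenating a shortest $x$–$y$ walk with a shortest $y$–$z$ walk yields an $x$–$z$ walk. This is just the routine fact that shortest-path distances form a (pseudo)metric, and it uses no property of $d$ beyond non-negativity.

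The crux is showing $\dpr(v,c) = d(v,c)$ for $v \in \V$, $c \in \C$. The inequality $\dpr(v,c) \le d(v,c)$ is trivial, since the single edge $\{v,c\}$ is a walk of length $d(v,c)$. For the reverse inequality, observe that because $H$ is bipartite and the walk starts at an agent and ends at an alternative, every $v$–$c$ walk has an odd number of edges and alternates $\V,\C,\V,\C,\ldots$. I would then prove by induction on the number of edges that any such walk has length at least $d(v,c)$. The base case is a single edge. For the inductive step, consider the first three edges $v \to c_1 \to v_1 \to c_2$ of the walk; applying Condition \ref{eqn:quad-ineq} with $v,c_1,v_1,c_2$ in the roles of $v,\cpr,\vpr,c$ gives $d(v,c_2) \le d(v,c_1) + d(v_1,c_1) + d(v_1,c_2)$, so replacing this length-three prefix by the direct edge $\{v,c_2\}$ produces a walk that is no longer and has two fewer edges, at which point the induction hypothesis applies. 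Combining the two directions yields $\dpr(v,c)=d(v,c)$, which is the claim. The main obstacle — and really the only nontrivial point — is this last step: recognizing that the quadrilateral inequality is precisely the statement that a three-edge $\V$–$\C$–$\V$–$\C$ detour cannot shortcut a direct edge, so that iterating it collapses any alternating walk down to the direct edge. Everything else is bookkeeping; the one detail I would be careful about is the parity/alternation argument that justifies the shape of the first three edges, which follows from bipartiteness together with the walk beginning at an agent.
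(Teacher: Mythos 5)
Your proof is correct, but it takes a genuinely different route from the paper's. The paper extends $d$ by an explicit closed-form formula, setting $\dpr(c,\cpr) = \max_{v \in \V} |d(v,c) - d(v,\cpr)|$ and $\dpr(v,\vpr) = \max_{c \in \C} |d(v,c) - d(\vpr,c)|$ --- the \emph{largest} within-part distances consistent with the prescribed cross distances --- and then verifies the triangle inequality separately for each type of triple, invoking Condition~\ref{eqn:quad-ineq} for the mixed triangles. You instead take the \emph{smallest} consistent extension, namely the shortest-path completion of the weighted bipartite graph on $\V \cup \C$, so the metric axioms for $\dpr$ come for free from general facts about shortest-path distances, and the entire content of the q-metric hypothesis is isolated in one place: the induction showing that a three-edge alternating detour $v \to c_1 \to v_1 \to c_2$ can be collapsed to the direct edge without increasing length, which is exactly one application of Condition~\ref{eqn:quad-ineq} of Definition~\ref{def:qmetric}. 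That induction, together with the parity/alternation observation, is sound, and the base case is the single direct edge. One point worth noting explicitly: shortest-path distances form only a pseudometric in general (distinct points may end up at distance zero), but Definition~\ref{def:metric} does not demand positivity for distinct points, so this is not an obstacle. In summary, your approach trades the paper's small case analysis over triangle types for a single clean inductive lemma plus some bipartite bookkeeping; both are complete proofs, and your side remark that $\dpr(v,\vpr) = \min_{c}\left[d(v,c)+d(\vpr,c)\right]$ (rather than merely an upper bound) also follows from the same collapsing argument, though it is not needed for the result.
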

\begin{proof}
For all $v,\vpr \in \V$ and $c,\cpr \in \C$,  we define
\begin{align}
\dpr(v,c) = \dpr(c,v) = d(v,c), \\
\dpr(c,\cpr) = \max_{v \in \V} |d(v,c) - d(v,\cpr)|, \\
\dpr(v,\vpr) = \max_{c \in \C} |d(v,c) - d(\vpr,c)|.
\end{align}
Clearly, by the above definitions, and that of a q-metric, for all $x,y \in \C \cup \V$, we have $d(x,y) \ge 0$, $d(x,x) = 0$ and $d(x,y) = d(y,x)$.

Consider $c_1,c_2,c_3 \in \C$. Without loss of generality with respect to $c_1,c_2,c_3$, there exists $u \in \V$ such that 
\begin{align*}
\dpr(c_1,c_3) &= d(u,c_1) - d(u,c_3)& \\
&= d(u,c_1) - d(u,c_2) + d(u,c_2) - d(u,c_3) &\\
&\leq |d(u,c_1) - d(u,c_2)| + |d(u,c_2) - d(u,c_3)| &\\
&\leq \dpr(c_1,c_2) + \dpr(c_2,c_3).&
\end{align*}

Consider $v_1,v_2 \in \V$ and $c \in \C$. Again without loss of generality with respect to $v_1,v_2$, there exists $\cpr \in \C$ such that 
\begin{align*}
\dpr(v_1,v_2) &= d(v_1,\cpr) - d(v_2,\cpr) &\\
&\leq d(v_1,c) + d(v_2,c) & \\
&= \dpr(v_1,c) + \dpr(v_2,c).&
\end{align*}
The inequality in the second line of the above follows by Condition \ref{eqn:quad-ineq} in Definition \ref{def:qmetric}.
Inequalities corresponding to Condition \ref{eqn:tri-ineq} in Definition \ref{def:metric} for triangles given by $v_1,v_2,v_3$ and $v,c_1,c_2$ for all $v,v_1,v_2,v_3 \in \V$ and $c_1,c_2 \in \C$ follow analogously.
\end{proof}

Henceforth, we will deal mainly with \emph{q-metrics} and use the terms \emph{metric} and \emph{q-metric} interchangeably.

\subsection{Distortion}
We say that a metric $d$ is \emph{consistent} with an instance $\sigma$, if whenever any agent $v$ prefers $c$ over $\cpr$, then the her cost for $c$ must be at most her cost for $\cpr$, i.e., $c \succ_v \cpr \implies d(v,c) \leq d(v,\cpr)$. We denote by $\rho(\sigma)$ the set of all metrics $d$ that are consistent with $\sigma$.

The social cost of an alternative is taken as the sum of agent costs for it. For any metric $d$, we define $\phi(c,d) = \sum_{v \in \V} d(v,c)$. For any instance $\sigma$, a consistent metric $d$, and any deterministic social choice rule $f$, define $\Phi(f(\sigma),d) = \phi(f(\sigma),d)$. If $f$ is a randomized social choice rule, we define $\Phi(f(\sigma),d) = \E[\phi(f(\sigma),d)]$.

As mentioned before, we want to measure how close a social choice rule gets to the optimal alternative in terms of social cost. We view every social choice as trying to approximate the optimal alternative, with knowledge of only the agent preference instance $\sigma$, but not the underlying metric cost $d$ that induces $\sigma$. To measure this performance, we take the ratio of the social cost of the alternative chosen by the rule for $\sigma$, and the optimal alternative according to $d$. Distortion \citep{procaccia2006distortion} is then defined as the worst-case value of this quantity over all metrics $d$ that are consistent with $\sigma$:
\begin{align*}
\dist(f,\sigma) = \sup_{d \in \rho(\sigma)} \frac{\Phi(f(\sigma),d)}{\min_{c \in \C} \phi(c,d)}
\end{align*}
In other words, the distortion of a rule $f$ on an instance $\sigma$ is the worst-case ratio of the social cost $\Phi$ of $f(\sigma)$, and that of the optimal alternative. By worst-case we mean the largest value of the above over all possible metrics $d$ that could induce $\sigma$, since $f$ does not know what the true underlying metric is. 
In fact, we look to bound the quantity $\dist(f,\sigma)$ over all possible instances, so as to have a measure of performance for the given rule $f$ independent of the what the instance is, i.e.,  the \emph{worst-case distortion} of $f$.

\subsection{Fairness} \label{subsec:prelims-fairness}
Given an underlying metric, based on the alternative chosen, the costs incurred might vary widely among the agents. We want to formally quantify how ``fair" choosing a particular alternative is. For this purpose, we look at social cost defined as the sum of $k$ largest agent costs, for all $1 \leq k \leq N$. For any metric $d$ and $c \in \C$, we define $\forall 1\leq k \leq N$,
\begin{align*}
\phi_k(c,d) = \max_{S \subseteq \V : |S| = k} \sum_{v \in \V} d(v,c).
\end{align*}
For a deterministic social choice rule $f$, we define $\Phi_k(f(\sigma),d) = \phi_k(f(\sigma),d)$, for all instances $\sigma$ and consistent metrics $d$. If $f$ is a randomized social choice rule, we define $\Phi_k(f(\sigma),d) = \E[\phi_k(f(\sigma),d)]$\footnote{We could define variations of this objective, leading to interesting open questions (see Section \ref{subsec:metriccandidate})}, for all instances $\sigma$ and consistent metrics $d$. We define the fairness-ratio of $f$ as follows:
\begin{align*}
\fratio(f,\sigma) = \sup_{d \in \rho(\sigma)} \max_{1 \leq k \leq N} \frac{\Phi_k(f(\sigma),d)}{\min_{c \in \C} \Phi_k(c,d)}
\end{align*}
The fairness ratio of a rule $f$ on an instance $\sigma$ is a worst-case bound on how well it simultaneously (for all $k$) approximates the social cost given by $\Phi_k$ of $f(\sigma)$, compared to the optimal alternative, over all possible metrics $d$ that could induce $\sigma$, without knowing what the true underlying metric is.

\paragraph{Bounds for general convex costs via the fairness ratio}
Another reason for studying the fairness ratio is that for deterministic social choice rules, a bound on the fairness-ratio translates to an approximation result with respect to any canonical cost function -- a symmetric convex function $F$ of the vector of agent costs such that $F\left(\vec{0}\right) = 0$ and $F$ is non-decreasing in each argument \citep{goel2006simultaneous}. 

For any $c \in \C$, define $\vec{d}(c) = [d(v_1,c), \ldots, d(v_N,c)]$, where $\V = \{v_1,\ldots,v_N\}$.

\begin{theorem}\label{thm:maj-fair}
For any deterministic social choice rule $f$, instance $\sigma$, consistent metric $d$, and canonical cost function $F$, if $~\fratio(f,\sigma) \leq \alpha$, then for any $c \in \C$, 
\begin{align*}
F \left( \frac{\vec{d}(f(\sigma))}{\alpha} \right) \leq F \left( \vec{d}(c) \right).
\end{align*}
\end{theorem}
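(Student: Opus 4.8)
The plan is to recognize $\phi_k(c,d)$ as the sum of the $k$ largest coordinates of the cost vector $\vec{d}(c)$, and thereby reduce the statement to the classical majorization inequality for symmetric, convex, monotone functions. First I would unpack the hypothesis: since $\fratio(f,\sigma)$ is a supremum over consistent metrics and a maximum over $k$, the bound $\fratio(f,\sigma)\le\alpha$ forces, for the given metric $d$ and every $1\le k\le N$, that $\Phi_k(f(\sigma),d)\le \alpha\min_{\cpr\in\C}\phi_k(\cpr,d)\le \alpha\,\phi_k(c,d)$ for the particular $c$ in the statement (using $\Phi_k(f(\sigma),d)=\phi_k(f(\sigma),d)$ for deterministic $f$). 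Writing $a=\vec{d}(f(\sigma))$ and $b=\vec{d}(c)$, and letting $a^\downarrow,b^\downarrow$ denote their coordinates sorted in non-increasing order, this says precisely $\sum_{i=1}^k a^\downarrow_i\le \alpha\sum_{i=1}^k b^\downarrow_i$ for all $k$. Since $\alpha>0$, scaling by $1/\alpha$ commutes with sorting, so $\sum_{i=1}^k (a/\alpha)^\downarrow_i\le \sum_{i=1}^k b^\downarrow_i$ for every $k$; that is, $\vec{d}(f(\sigma))/\alpha$ is weakly sub-majorized ($\prec_w$) by $\vec{d}(c)$.

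Second, I would invoke the majorization inequality. A canonical cost function is exactly a symmetric, convex, coordinatewise non-decreasing $F$ with $F(\vec{0})=0$, which is precisely the regime in which weak sub-majorization transfers to function values: $x\prec_w y$ implies $F(x)\le F(y)$. The standard route factors this through an intermediate vector $u$ with $x\le u$ coordinatewise and $u$ strongly majorized by $y$ (so that the total sums are equal); monotonicity of $F$ gives $F(x)\le F(u)$, and Schur-convexity of the symmetric convex $F$ (Hardy--Littlewood--P\'olya) gives $F(u)\le F(y)$. Applying this with $x=\vec{d}(f(\sigma))/\alpha$ and $y=\vec{d}(c)$ yields the theorem.

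The only genuine content is the first step, a routine rewriting of $\phi_k$ as a prefix sum together with the observation that the $1/\alpha$ scaling preserves weak majorization. If one insists on a self-contained argument, the main obstacle is the majorization inequality itself --- in particular, constructing the dominating-and-majorized vector $u$ and verifying that non-decreasingness of $F$ is exactly what bridges the gap between weak and strong majorization. Since the paper already builds on the approximate-majorization framework of \citep{goel2006simultaneous}, the cleanest presentation treats that inequality as a black box and cites it, spending its effort only on the translation of $\fratio(f,\sigma)\le\alpha$ into the $\prec_w$ relation.
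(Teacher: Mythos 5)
Your proposal is correct and follows essentially the same route as the paper: translate $\fratio(f,\sigma)\le\alpha$ into the statement that $\vec{d}(f(\sigma))$ is $\alpha$-submajorized by $\vec{d}(c)$ (equivalently, $\vec{d}(f(\sigma))/\alpha$ is weakly sub-majorized by $\vec{d}(c)$), then invoke the submajorization-to-convex-function inequality, which the paper also cites as a black box (Theorem 2.3 of \citep{goel2006simultaneous}) rather than reproving. Your extra sketch of why weak sub-majorization transfers to symmetric convex non-decreasing $F$ is a correct but optional elaboration of that cited result.
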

\begin{proof}
For any vector $\vec{x} \in \mathbbm{R}^N$, let $x_{(1)} \geq x_{(2)} \geq \ldots x_{(i)} \geq \ldots x_{(N)}$ denote its components arranged in some non-decreasing order. For any $1 \leq k \leq N$, define $S_k(\vec{x}) = \sum_{i = 1}^k x_{(i)}$.

A vector $\vec{x}$ is said to be $\alpha$-submajorized by $\vec{y}$ iff $S_k(\vec{x}) \leq \alpha S_k(\vec{y})$, for all $1 \leq k \leq N$.

If $\fratio(f,\sigma) \leq \alpha$, then for any $c \in \C$, we have that $\vec{d}(f(\sigma))$ is $\alpha$-submajorized by $\vec{d}(c)$. This implies that $F \left( \frac{\vec{d}(f(\sigma))}{\alpha} \right) \leq F \left( \vec{d}(c) \right)$ (Theorem 2.3 in \citep{goel2006simultaneous}).
\end{proof}

If a deterministic social choice rule has a fairness ratio of at most $\alpha$, then for all $ p\ge 1$, the $l_p$ norm of the cost vector for the agents under this social choice rule is at most $\alpha$ times the optimum, giving an ``all-norms'' approximation (Corollary \ref{cor:maj-fair-norm}). As special cases, this gives an $\alpha$-approximation for many objective functions such as the sum, the maximum, and the $\ell_2$ norm of the agents costs for an alternative, using $p=$ $1$, $\infty$, and $2$ respectively.

\begin{corollary}\label{cor:maj-fair-norm}
For any deterministic social choice rule $f$, instance $\sigma$, and $p \geq 1$, if $~\fratio(f,\sigma) \leq \alpha$, then for any $c \in \C$, 
\begin{align*}
|| \vec{d}(f(\sigma)) ||_p \leq \alpha || \vec{d}(c) ||_p.
\end{align*}
\end{corollary}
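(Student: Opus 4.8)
The plan is to obtain Corollary \ref{cor:maj-fair-norm} as an immediate instance of Theorem \ref{thm:maj-fair}, taking the canonical cost function $F$ to be the $\ell_p$ norm itself. So the first step is to check that, for each fixed $p \geq 1$, the map $F(\vec{x}) = \| \vec{x} \|_p$ is a canonical cost function on the domain of agent-cost vectors. Symmetry under coordinate permutations and convexity are standard properties of any $\ell_p$ norm; $F(\vec{0}) = 0$ is immediate; and since every coordinate of a cost vector $\vec{d}(c)$ is nonnegative (the metric/q-metric axioms give $d(v,c) \geq 0$), the norm is non-decreasing in each argument on this domain. Hence $F$ satisfies all four requirements.

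With this choice in hand, the second step is simply to invoke Theorem \ref{thm:maj-fair}. Under the hypothesis $\fratio(f,\sigma) \leq \alpha$, the theorem yields, for every $c \in \C$,
\[
\left\| \frac{\vec{d}(f(\sigma))}{\alpha} \right\|_p \leq \left\| \vec{d}(c) \right\|_p .
\]
The final step is to clear the scalar $1/\alpha$ using positive homogeneity of the norm: since $\fratio(f,\sigma) \geq 1$ forces $\alpha > 0$, we have $\left\| \vec{d}(f(\sigma))/\alpha \right\|_p = \tfrac{1}{\alpha} \left\| \vec{d}(f(\sigma)) \right\|_p$, and multiplying through by $\alpha$ gives the claimed inequality $\| \vec{d}(f(\sigma)) \|_p \leq \alpha \, \| \vec{d}(c) \|_p$.

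There is essentially no hard step here; the only point that requires a moment of care is verifying the \emph{non-decreasing in each argument} condition for the $\ell_p$ norm, which relies crucially on the cost vectors lying in the nonnegative orthant --- a fact supplied by the metric axioms rather than by the norm in isolation. Taking $p = 1$, $p = 2$, and $p = \infty$ then recovers the approximation guarantees for the sum of costs, the $\ell_2$ norm, and the maximum cost respectively as special cases, so the corollary is best read as packaging the ``all-norms'' consequence of the submajorization statement already established in Theorem \ref{thm:maj-fair}.
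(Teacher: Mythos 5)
Your proof is correct and follows exactly the route the paper intends: the corollary is stated as a direct instantiation of Theorem \ref{thm:maj-fair} with $F = \|\cdot\|_p$, and your verification that the $\ell_p$ norm is a canonical cost function on the nonnegative orthant, followed by homogeneity to clear the factor $1/\alpha$, is the (unwritten) argument the paper relies on. Your added care about monotonicity holding only because the cost vectors are nonnegative is a worthwhile observation the paper glosses over.
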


\section{Lower bounds on distortion}\label{sec:lower-bounds}

In this section, we will establish that Ranked Pairs fails to achieve a distortion of at most $3$, contrary to what has been conjectured \citep{anshelevich2015approximating}, thereby falling short of the lower bound on the worst-case distortion of any deterministic rule. We also show a similar result on how tournament/weighted-tournament rules fail to come close to the lower bound of $2$ on the worst-case distortion of any randomized rule.

\begin{figure}[h!]
  \centering
  \def\svgwidth{\columnwidth}
  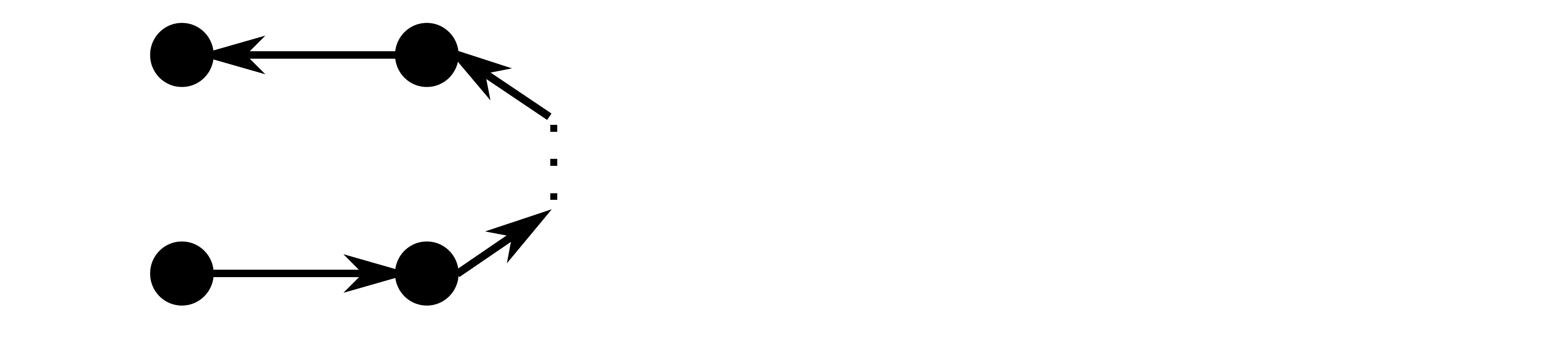
 \vspace{-15pt}
  \caption{Weighted-tournament Graph: A cycle of heavy edges making $c_1$ the Ranked Pairs winner.}
  \label{fig:cycle}
\end{figure}

\subsection{Ranked Pairs}

 In our first result, we will show that the worst-case distortion of Ranked Pairs is at least 5. \citep{anshelevich2015approximating} conjectured that the worst case bound here is 3. This conjecture was based on the result that if the tournament graph does not have cycles of length greater than 4, then the distortion of Ranked Pairs is, in fact, bounded above by 3. 

\begin{theorem}[\citep{anshelevich2015approximating}]\label{thm:anshel}  The distortion of ranked pairs is at most 3, as long as the tournament graph has circumference at most 4.
\end{theorem}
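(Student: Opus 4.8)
The plan is to compare the Ranked Pairs winner $w$ against the socially optimal alternative $c$ (the minimizer of $\phi(\cdot,d) = \sum_{v\in\V} d(v,\cdot)$) and to show $\phi(w,d)\le 3\,\phi(c,d)$ for every consistent metric $d$. The workhorse is an elementary ``one-hop'' bound: if $a$ pairwise-beats $b$, then $\phi(a,d)\le 3\,\phi(b,d)$. I would prove it by splitting the agents into the set $A$ of those preferring $a$ (so $d(v,a)\le d(v,b)$ for $v\in A$, and $|A|\ge N/2$) and its complement $B$ with $|B|\le N/2$; for $v\in B$ I bound $d(v,a)$ by the quadrilateral inequality (Definition \ref{def:qmetric}) routed through the agent $u\in A$ minimizing $d(u,b)$, giving $d(v,a)\le d(v,b)+d(u,b)+d(u,a)\le d(v,b)+2d(u,b)\le d(v,b)+\tfrac{4}{N}\phi(b,d)$, and then summing. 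If $w$ itself pairwise-beats $c$, this one-hop bound is the whole theorem.

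The remaining case is $c$ pairwise-beats $w$, which I would attack using the mechanics of Ranked Pairs together with the circumference hypothesis. Since $w$ is the source of the final acyclic graph, the majority edge $c\to w$ cannot have been locked in (otherwise $w$ would have positive in-degree); hence at the moment it was processed there already existed a directed path $w=x_0\to x_1\to\cdots\to x_k=c$ of previously accepted edges. Because edges are locked from heaviest to lightest, each such edge has weight at least $w(c,w)\ge N/2$ and is therefore a majority edge. Together with $c\to w$ this path closes a directed cycle through $w$ and $c$ in the (unweighted) tournament graph. By the circumference-$\le 4$ assumption this cycle has length at most $4$; since $c$ beats $w$ forbids a direct edge $w\to c$, the path length obeys $k\in\{2,3\}$. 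Thus I am left with only a $3$-cycle $w\to z\to c\to w$ or a $4$-cycle $w\to y\to z\to c\to w$, in each of which every edge is won by a majority.

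It remains to establish $\phi(w,d)\le 3\,\phi(c,d)$ directly in these two bounded configurations, and this is the crux and the main obstacle: naively chaining the one-hop bound along a length-$2$ or length-$3$ path only yields the useless factors $9$ or $27$. I would instead argue at the agent level. Partition the agents by their relative ranking of the (at most four) cycle vertices; charge $d(v,w)\le d(v,c)$ for every agent preferring $w$ to $c$, and for the rest bound $d(v,w)$ by the quadrilateral inequality routed through agents witnessing the majority edges of the cycle, which are guaranteed to number at least $N/2$ each. Summing these per-type bounds reduces the claim to verifying a small linear program in the agent-type counts, whose feasible region is pinned down by the three (resp.\ four) majority constraints together with the fact that $c$ minimizes social cost. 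The optimality of $c$ and the $\ge N/2$ size of each majority are exactly what drive the aggregate multiplier down to $3$ rather than $9$; checking this constant in each of the two configurations is the delicate, case-specific computation that the circumference bound keeps finite.
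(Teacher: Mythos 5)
The paper does not prove this statement; it is imported verbatim from \citep{anshelevich2015approximating}, so there is no in-paper argument to compare against. Your scaffolding is sound: the one-hop lemma ($a$ pairwise-beats $b$ implies $\phi(a,d)\le 3\phi(b,d)$) is correctly proved via the quadrilateral inequality and the averaging argument over the majority set, and the reduction --- using the fact that the edge $c\to w$ must be rejected by Ranked Pairs, that all earlier-locked edges have weight at least $w(c,w)\ge N/2$, and the circumference hypothesis --- to a $3$-cycle $w\to z\to c\to w$ or a $4$-cycle $w\to y\to z\to c\to w$ of majority edges is valid.

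The step you yourself flag as the crux, however, is not merely unfinished; as described it cannot work. You propose to certify the factor $3$ in these two configurations by a linear program over agent-type counts whose feasible region is ``pinned down by the three (resp.\ four) majority constraints together with the fact that $c$ minimizes social cost.'' That constraint set provably does not imply a ratio of $3$: the standard lower-bound instance for Copeland (cited in Section \ref{sec:lower-bounds} of this paper) is exactly a $3$-cycle of majority edges $w\to z\to c\to w$ in which $c$ is the social-cost minimizer and $\phi(w,d)/\phi(c,d)$ approaches $5$. That instance satisfies every hypothesis your LP would impose, so the best bound such an LP can certify in the $3$-cycle case is $5$. The information you are discarding at the decisive moment is precisely what separates Ranked Pairs from Copeland here: for $w$ to remain the source, the locked path edges must satisfy $w(w,z)\ge w(c,w)$ and $w(z,c)\ge w(c,w)$ (similarly for the $4$-cycle), i.e., each path edge is at least as heavy as the margin by which $c$ defeats $w$, not merely a bare majority. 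You record exactly this fact when constructing the cycle and then retain only its weaker ``majority'' consequence in the final accounting. Any correct completion must carry these weight-ordering inequalities into the agent-level argument; and even granting that repair, the ``delicate, case-specific computation'' you defer is the entire substance of the theorem, so the proof as written is incomplete.
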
 
 
 Assume for a moment that among the set of alternatives $\C$, $c$ is the Ranked Pairs winner, and $\cpr$ is the optimal alternative that minimizes the sum of agent costs. To achieve a large distortion, $\cpr$ must beat $c$ often.  And since $c$ is the Ranked Pairs winner, at the step when $\cpr \to c$ is considered in the Ranked Pairs iteration over edges, a path from $c$ to $\cpr$ must already be in place.
 
One way of achieving this structure is to have $n$ agents, each with a preference ordering that is a different \emph{cyclic permutation} of $c_1,c_2, \ldots, c_n$. $c_1$ is then a Ranked Pairs winner (assuming ties are broken in its favor), and the cycle $c_1 \to c_2, c_2 \to c_3, \ldots, c_{n-1} \to c_n, c_n \to c_1$ has edges of (equal) weight larger than those of edges not on the cycle (Figure \ref{fig:cycle}). The worst case distortion in this case, however, is only 3. 

\subsubsection{Coupling of two sets of cyclic permutations} To achieve a larger distortion, we engineer an overall cyclic structure similar to Figure \ref{fig:cycle} on $2n+1$ alternatives with $n+2$ agents as follows: construct $n$ agents with distinct preference orderings by taking each cyclic permutation of $c_1,\ldots,c_n$ and coupling it with a corresponding permutation of $c_{n+2},\ldots,c_{2n+1}$, pivoted about $c_{n+1}$. We add two agents with the preference exactly $c_1,c_2, \ldots, c_{2n+1}$. 

To understand this coupling, let us look at a related example when $n=2$.

\begin{figure}[h!]
  \centering
  \includegraphics[width = 0.5\columnwidth]{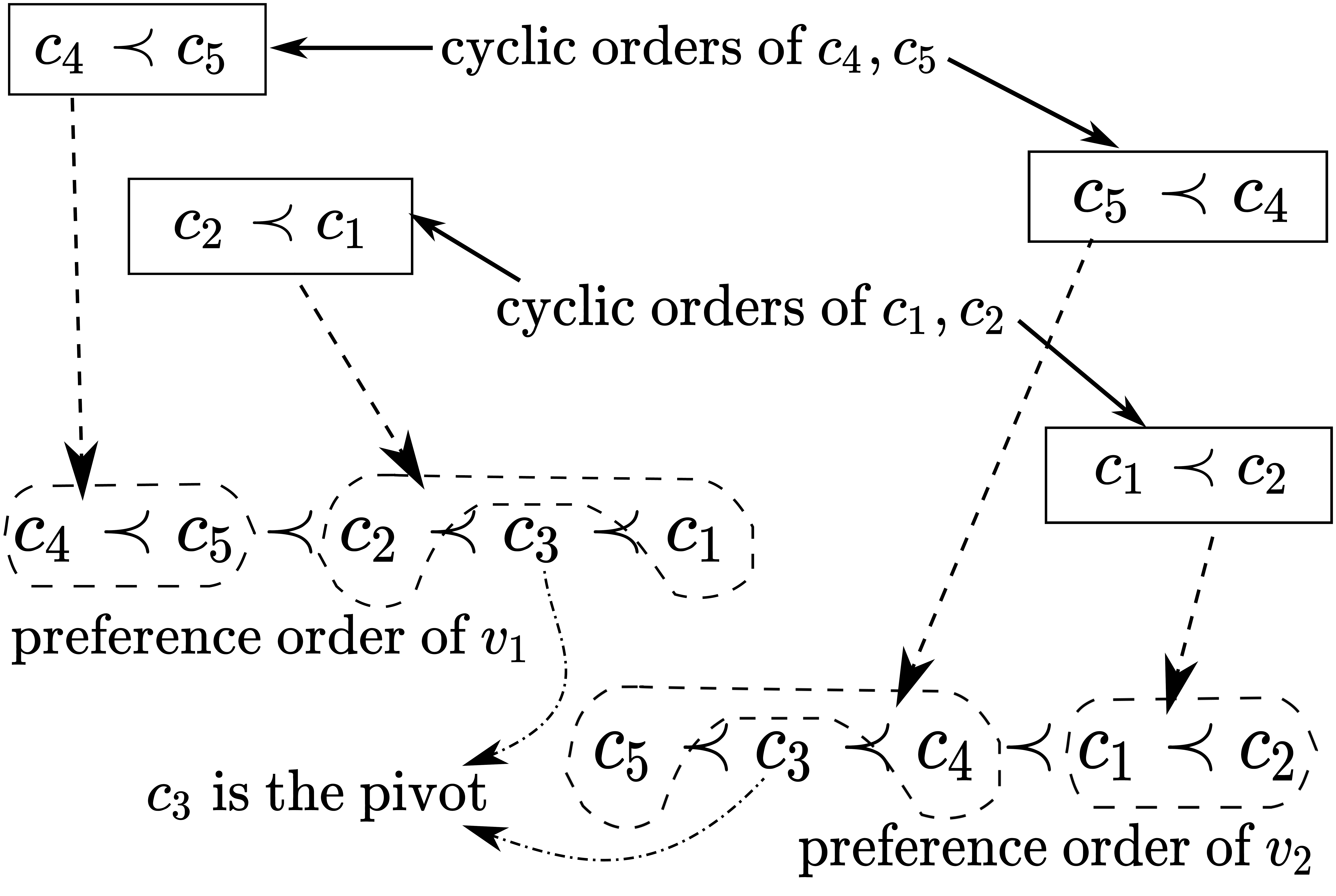}
 \caption{Coupling example: $n=2$}
  \label{fig:coupled}
\end{figure}

\begin{example}
We couple $c_4 \succ c_5$ with $c_2 \succ c_1$, and $c_5 \succ c_4$ with $c_1 \succ c_2$ using $c_3$ as a pivot to get two agents $v_1$ and $v_2$ as in Figure \ref{fig:coupled}.
To make $c_1$ the unique Ranked Pairs winner, we make $n$ copies each of $v_1$ and $v_2$, and add $n+1$ copies of a third agent $v_0$ with preference $c_1 \succ c_2 \succ c_3 \succ c_4 \succ c_5$.
We will see in the proof of Theorem \ref{thm:rpbound} that the following is a valid metric:

\begin{align*}
&\underbrace{c_4 \succ_{v_1} c_5}_\text{$d(v_1,.) = 1$} \succ_{v_1} \underbrace{c_2 \succ_{v_1} c_3}_\text{$d(v_1,.) = 3$} \succ_{v_1} \underbrace{c_1}_\text{$d(v_1,.)=5$} \\
&\underbrace{c_5}_\text{$d(v_2,.) = 0$} \succ_{v_2} \underbrace{c_3 \succ_{v_2} c_4}_\text{$d(v_2,.) = 2$} \succ_{v_2} \underbrace{c_1 \succ_{v_2} c_2}_\text{$d(v_2,.)=4$} \\
&\underbrace{c_1 \succ_{v_0} c_2 \succ_{v_0} c_3 \succ_{v_0} c_4 \succ_{v_0} c_5}_\text{$d(v_0,.) = 2$}
\end{align*}
The ratio of the total costs of $c_1$ and $c_5$ here is $\frac{5(n)+4(n)+2(n+1)}{1(n)+0(n)+2(n+1)} = \frac{11n+2}{3n+2}$, which is more than 3 for $n \geq 3$. This  serves as simple counter-example to the conjecture that Ranked Pairs achieves a distortion of 3. 
\end{example}

This example can be modified to give a sequence of instances that lead to a distortion of $5$ in the limit. In every instance in this sequence, we will see that the Ranked Pairs winner does not depend on how ties are broken.
\begin{theorem} \label{thm:rpbound}
There exists a sequence of instances $\{ \sigma^{(n)}\}_{n \geq 2}$ such that \[\lim_{n \to \infty} \dist(\mbox{Ranked-Pairs},\sigma^{(n)}) = 5.\]
\end{theorem}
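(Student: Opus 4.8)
The plan is to turn the $n=2$ coupling of Example 1 into a precise family $\{\sigma^{(n)}\}_{n\ge 2}$, pin down the Ranked Pairs winner, and then exhibit a consistent metric whose induced distortion tends to $5$. First I would write $\sigma^{(n)}$ out explicitly: for each $i \in \{1,\dots,n\}$ I create one agent by interleaving the $i$-th cyclic permutation of the block $c_1,\dots,c_n$ with the coupled cyclic permutation of the block $c_{n+2},\dots,c_{2n+1}$, using $c_{n+1}$ as the pivot exactly as in the example, and then append two ``identity'' agents ranking $c_1 \succ c_2 \succ \cdots \succ c_{2n+1}$. The identity agents exist solely to break the perfect symmetry of a pure cycle in favour of $c_1$.

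Next I would compute the weighted-tournament graph $G_t(\sigma^{(n)})$, tabulating every edge weight as a function of $n$. The coupling is engineered so that a single Hamiltonian-type cycle of edges carries strictly larger weight than all off-cycle edges, exactly the picture in Figure \ref{fig:cycle}. Running Ranked Pairs on this weight order, the heavy cycle edges are processed first; all but one are locked into the growing acyclic graph, and the unique edge that would close the cycle is discarded. The two identity agents guarantee that the discarded edge is the one entering $c_1$, so $c_1$ is the unique source of the resulting DAG and hence the winner. Crucially, I must argue this conclusion is forced for \emph{every} tie-breaking rule, as the theorem asserts; this is the delicate part, since Ranked Pairs is sensitive to the order in which equal-weight edges are processed.

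I would then build a consistent q-metric $d^{(n)}$ that drives the distortion up. Guided by the metric in Example 1, costs rise in near-unit steps along each agent's preference order, placing $c_1$ at the expensive far end of the cycle for every cyclic agent while the true optimum (an alternative in the pivot/second block, playing the role of $c_5$ in the example) sits at the cheap end. The value $5$ is precisely the slack available in a tight quadrilateral inequality of the shape $d(v_1,c_1) \le d(v_1,c_{n+1}) + d(v_2,c_{n+1}) + d(v_2,c_1)$, chained around the coupled structure; by Lemma \ref{thm:tri-eq-quad} it suffices to verify the quadrilateral inequalities, which I would check directly from the step structure rather than constructing an explicit embedding.

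Finally I would compute $\dist(\mbox{Ranked-Pairs},\sigma^{(n)}) \ge \phi(c_1,d^{(n)})/\min_{c}\phi(c,d^{(n)})$ and show the right-hand side tends to $5$: each cyclic agent contributes cost $\approx 5$ for $c_1$ against $\approx 1$ for the optimum, while the two identity agents contribute only an $O(1/n)$ fraction of the total that washes out in the limit. I expect the main obstacle to be twofold and interlocking: establishing tie-break-independence of the Ranked Pairs winner, and simultaneously making all the chained quadrilateral inequalities tight enough to reach $5$ (rather than merely exceed the $3$ of the naive cycle of Figure \ref{fig:cycle}) while keeping $d^{(n)}$ a legal q-metric. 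These last two requirements pull against one another, and reconciling them is what forces the careful telescoping cost assignment already visible in Example 1.
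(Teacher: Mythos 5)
Your plan follows the same route as the paper's proof: couple cyclic permutations of $c_1,\dots,c_n$ with cyclic permutations of $c_{n+2},\dots,c_{2n+1}$ about the pivot $c_{n+1}$, add two identity agents, assign costs in steps of $1,3,5$ per cyclic agent and $2$ per identity agent, and let the ratio $(5n+4)/(n+4)\to 5$. However, there is a genuine gap at exactly the point you flag as delicate, and your stated picture of the weighted tournament would make that gap unfixable. You describe the heavy edges as forming a Hamiltonian \emph{cycle} of equal weight, of which Ranked Pairs locks all but one and discards the one entering $c_1$. If that were the structure, the winner would genuinely depend on which of the tied cycle edges is processed last, and no appeal to the identity agents can force the discarded edge to be the one into $c_1$ under \emph{every} tie-breaking order --- so the theorem's tie-break-independent claim would fail for your construction as described. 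The actual construction avoids this entirely: the edges $c_i \to c_{i+1}$ for $1 \le i \le 2n$ form an acyclic Hamiltonian \emph{path}, each of weight exactly $n+1$ (every agent except one prefers $c_i$ to $c_{i+1}$), while every other edge --- including the back edge $c_{2n+1}\to c_1$, which only the $n$ cyclic agents support --- has weight at most $n$. Since the top-weight edges already form a DAG with $c_1$ as its source and are strictly separated from all the rest, all of them are inserted before any other edge is considered, and every subsequent edge into $c_1$ closes a cycle and is discarded; the winner is $c_1$ under any tie-breaking. This strict weight separation is the missing idea, and it also dissolves the ``tension'' you describe: tie-break independence is a purely combinatorial property of the profile and does not interact with how tight the quadrilateral inequalities are.

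The second unexecuted step is the verification that the cost assignment is a legal q-metric. The paper does this by partitioning the alternatives into $A=\{c_1,\dots,c_n\}$ and $B=\{c_{n+1},\dots,c_{2n+1}\}$ and checking, for each of the finitely many combinations of values in $\{1,2,3,5\}$, that $|d(v,c)-d(v,c')| \le d(w,c)+d(w,c')$ for all agents $v,w$ and alternatives $c,c'$ (the worst case being a difference of $4$ against a sum of $4$ across the two blocks). Your proposal defers this to ``checking directly from the step structure,'' which is the right method via Lemma~\ref{thm:tri-eq-quad}, but it is precisely the choice of the values $1,3,5$ (and $2$ for the identity agents) that makes these inequalities hold with equality in the worst case, so the verification cannot be waved through: a cost profile with larger steps would push the ratio above $5$ but violate the quadrilateral inequality.
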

\begin{proof}
For each $n \ge 2$, construct an instance $\sigma^{(n)}$ and a corresponding metric $d$ as follows: 
There are $n+2$ agents given by $\V = \{v_0, \vpr_0, v_1, v_2, \ldots, v_n\}$, and $2n+1$ alternatives given by $\C = \{c_1,c_2,\ldots,c_{2n+1}\}$.

Both $v_0$ and $\vpr_0$ have the preference order $c_1 \succ c_2 \succ \ldots \succ c_{2n+1}$, and $d(v_0,c) = d(\vpr_0,c) = 2$ for all $c \in \C$.

For $1 \leq i \leq n$, $v_i$ has the preference order 
\[ \underbrace{c_{n+i+1} \succ ... \succ c_{2n+1}}_\text{$d(v_i,.) = 1$} \succ \underbrace{c_{i+1} \succ ... \succ c_{n+i}}_\text{$d(v_i,.) = 3$} \succ \underbrace{c_1 \succ ... \succ c_i}_\text{$d(v_i,.)=5$} \]. 

Also, define $d$ as follows, for all $c \in \C$:

\begin{align*}
    d(v_i , c_j) = 
\begin{cases}
   1,& \text{if } n+i+1 \leq j \leq 2n+1,\\
   3,& \text{if } i+1 \leq j \leq n+i, \\
   5,& \text{if } 1 \leq j \leq i.
\end{cases}
\end{align*}

First we show that $d$ thus constructed is a valid q-metric. For all $(v,c) \in \V \times \C$, $d(v,c) \geq 0$ is trivially satisfied.
Let $A = \{c_1,c_2,\ldots,c_n\}$ and $B = \{c_{n+1},\ldots,c_{2n+1}\}$. For all $a,\apr \in A$, and $b,\bpr \in B$, and $v \in \V$,

\begin{align*}
|d(v,a) - d(v,b)| \leq 4, \\
|d(v,a) - d(v,\apr)| \leq 2, \\
|d(v,b) - d(v,\bpr)| \leq 2.
\end{align*}

The first holds with equality when $d(v,a) = 5, d(v,b) = 1$, the second when one of $d(v,a), d(v,\apr) $ is 5 and the other is 3, and the third when one of $d(v,b), d(v,\bpr) $ is 3 and the other is 1. We also have

\begin{align*}
d(v,a) + d(v,b) \geq 4, \\
d(v,a) + d(v,\apr) \geq 4, \\
d(v,b) + d(v,\bpr) \geq 2.
\end{align*}

The first holds with equality when $d(v,a) = 3, d(v,b) = 1$ or $d(v,a) = d(v,b) = 2$, the second when $d(v,a) = d(v,\apr)=2 $, and the third when $d(v,b) = d(v,\bpr) = 1$. 
Putting the above inequalities together, we see that $d$ is a valid q-metric since it satisifies Definition \ref{def:qmetric}.

Also from the above, we have $\sum_{v \in \V}d(v,c_1) = 5n+4$ and $\sum_{v \in \V} d(v,c_{2n+1}) = n+4$, and so 
\[ \lim_{n \to \infty} \frac{\sum_{v \in \V}d(v,c_1)}{\sum_{v \in \V}d(v,c_{2n+1})} = 5.\]

We will now show that $c_1$ is the Ranked Pairs winner, irrespective of how ties are broken, in every $\sigma^{(n)}$.

Recall that $w(i,j) = |\{v \in \V : c_i \succ_v c_j \}|$, the strength of edge $c_i \to c_j$ in the weighted-tournament graph obtained from $\sigma_n$. We will first show that for all $1 \leq i \leq 2n$, $w(i,i+1) = n+1$: If $1 \leq i \leq n$, then $w(i,i+1) = n+1$, since $c_i \succ_v c_{i+1}$ for all $v \in \V$ except $v_i$. If $n+1 \leq i \leq 2n$, then $w(i,i+1) = n+1$, since $c_i \succ_v c_{i+1}$ for all $v \in \V$ except $v_{i-n}$.

All other edges $i \to j$ fall into the following cases: \begin{itemize}
\item $i < j-1$: If $j \leq n+1$, then $c_j \succ_{v_k} c_i$ for all $i \leq k \leq j-1$. A similar argument holds when $n+1 \leq i$. If $i \leq n$ and $j \geq n+2$, then $c_j \succ_{v_k} c_i$ at least for $k = i,j-n-1$; 
\item $i > j$, then $c_j \succ_v c_i$ at least for $v \in \{v_0,\vpr_0\}$. 
\end{itemize} 

In all these cases, $c_j \succ_v c_i$ at least for two agents, and thereby $w(i,j) \leq n$. Therefore, the edges $c_i \to c_{i+1}$ for $1 \leq i \leq 2n$ have the largest weights (with no ties) and consequently $c_1$ is the Ranked Pairs winner.
\end{proof}

The \emph{Schulze} method \citep{schulze2003new} also gives priority to edges of larger weight, albeit in a more complicated way. The above result holds for the Schulze rule since it also picks $c_1$ as the winner in the instances constructed.

\begin{corollary}
\[\lim_{n \to \infty} \dist(\mbox{Schulze},\sigma^{(n)}) = 5.\]
\end{corollary}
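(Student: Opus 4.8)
The plan is to observe that for a deterministic rule the quantity $\dist(f,\sigma^{(n)})$ depends on $f$ only through the single alternative $f(\sigma^{(n)})$ it outputs: indeed $\dist(f,\sigma^{(n)}) = \sup_{d \in \rho(\sigma^{(n)})} \phi(f(\sigma^{(n)}),d)/\min_{c \in \C}\phi(c,d)$ is a function of the chosen alternative alone. Hence, to transfer Theorem \ref{thm:rpbound} to Schulze, it suffices to prove that Schulze selects the same alternative $c_1$ on every instance $\sigma^{(n)}$. Once this is established, $\dist(\text{Schulze},\sigma^{(n)}) = \dist(\text{Ranked-Pairs},\sigma^{(n)})$ for all $n$, and the stated limit follows immediately from the theorem, inheriting its exact value of $5$ rather than requiring any fresh metric construction.

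It remains to show $c_1$ is the Schulze winner. I would reuse the weighted-tournament structure already computed in the proof of Theorem \ref{thm:rpbound}: the edges $c_i \to c_{i+1}$ for $1 \le i \le 2n$ each have weight $n+1$ and form a directed path $c_1 \to c_2 \to \cdots \to c_{2n+1}$, while every remaining edge has weight at most $n$. For each heavy edge the reverse weight is $N-(n+1)=1$, so condition (ii) in the definition of a strong path holds along the entire forward path. Following this path from $c_1$ therefore yields a path of strength $n+1$ to every other alternative, giving $p(c_1,c_j) \ge n+1$ for all $j \ne 1$. Conversely, no heavy edge points into $c_1$ (the forward path emanates from $c_1$ and never returns to it), so every edge entering $c_1$ has weight at most $n$; since the strength of any path is bounded by its weakest edge, $p(c_j,c_1) \le n$ for all $j \ne 1$. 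Combining the two bounds gives $p(c_1,c_j) \ge n+1 > n \ge p(c_j,c_1)$, i.e. $c_1 \succ^\star c_j$ strictly for every $j$, so $c_1$ is the unique Schulze winner, independent of tie-breaking.

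The argument has no real obstacle beyond bookkeeping, but the step that does the work is the observation that all edges into $c_1$ are light. This is exactly the feature that made $c_1$ a Ranked Pairs winner, and it is what lets us bound every incoming path strength by a single light edge. The one point to check carefully is condition (ii) for the forward heavy edges, which holds because there are only $N=n+2$ agents and each heavy edge wins $n+1$ of them; everything else is a direct consequence of the edge-weight classification borrowed from Theorem \ref{thm:rpbound}.
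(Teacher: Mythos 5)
Your proposal is correct and takes essentially the same route as the paper: reduce to Theorem \ref{thm:rpbound} by showing Schulze also selects $c_1$ on every $\sigma^{(n)}$. The paper simply asserts this fact without proof, whereas you supply the verification (heavy path of weight $n+1$ out of $c_1$ satisfying condition (ii), all edges into $c_1$ of weight at most $n$ bounding every incoming path strength), which is a sound and welcome elaboration of the same argument.
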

\begin{proof}
The proof follows from the fact that the Schulze rule also chooses $c_1$ as the winner in every instance in the sequence $\{ \sigma^{(n)}\}_{n \geq 2}$, irrespective of how ties are broken.
\end{proof}

Since methods like Ranked Pairs and the Schulze rule \citep{schulze2003new} fall in the category of weighted-tournament rules (C2 functions), we believe that no weighted-tournament rule can achieve a worst-case distortion of less than 5. 

\begin{conjecture}\label{conj:det-tourney}
Any weighted-tournament rule has a worst-case distortion of at least 5.
\end{conjecture}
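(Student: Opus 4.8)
The starting point is the defining property of weighted-tournament rules: $f(\sigma)$ depends on $\sigma$ only through the weighted-tournament graph $G_t(\sigma)$. The plan is to exploit this by building, for each $n$, a single instance $\sigma^{(n)}$ that is \emph{invariant} under the cyclic relabeling of alternatives $\pi : c_j \mapsto c_{j+1}$ (indices modulo $2n+1$), meaning that the multiset of agent orderings is unchanged when $\pi$ is applied to every ordering. Concretely I would start from the coupled construction of Theorem~\ref{thm:rpbound} and replace its symmetry-breaking pieces — the two ``identity'' voters $v_0,\vpr_0$ and the pivot $c_{n+1}$ — by a background profile that is closed under $\pi$, keeping all $2n+1$ rotations of each generating ordering in the profile (so the number of agents becomes a multiple of $2n+1$).

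Given such a $\pi$-invariant instance, the argument is purely symmetric. Invariance of the profile yields an agent relabeling $\tau$ with $\sigma_{\tau(v)} = \pi\cdot\sigma_v$; hence for any consistent metric $d \in \rho(\sigma^{(n)})$ the rotated metric $d'(v,c) := d(\tau^{-1}(v),\pi^{-1}(c))$ is again consistent with $\sigma^{(n)}$, and its social-cost vector is a cyclic shift of that of $d$, namely $\phi(c_j,d') = \phi(c_{j-1},d)$. Iterating, every cyclic shift of the cost vector is realized by some consistent metric. A weighted-tournament rule $f$ must commit to one winner $c^\star = f(\sigma^{(n)})$ on this single instance; choosing the shift that places the most expensive alternative at $c^\star$ gives
\[ \dist(f,\sigma^{(n)}) \;\ge\; \frac{\max_j \phi(c_j,d)}{\min_j \phi(c_j,d)}. \]
So it suffices to produce, for the symmetric instance, \emph{one} consistent metric whose largest and smallest alternative-costs differ by a factor approaching $5$.

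This last requirement is exactly what the metric of Theorem~\ref{thm:rpbound} delivers (cost $5n+4$ at $c_1$ versus $n+4$ at $c_{2n+1}$), and the whole difficulty is transplanting it onto a $\pi$-invariant profile. I would look for a consistent metric that, for the designated ``bad'' alternative, charges cost near $5$ to a $1-o(1)$ fraction of agents while keeping some alternative's average cost near $1$, and then verify the quadrilateral inequalities of Definition~\ref{def:qmetric} (equivalently, invoke Lemma~\ref{thm:tri-eq-quad}) for the symmetrized metric. Two things must hold in tandem: symmetrizing must not destroy the heavy directed cycle $c_1 \to c_2 \to \cdots \to c_{2n+1} \to c_1$ that concentrates the voter mass (this is what allows a gap as large as $5$ rather than the $3$ obtained by a plain cyclic-permutation profile), and the background agents added to close the profile under $\pi$ must not dilute the $5{:}1$ ratio.

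I expect the main obstacle to be precisely this tension between symmetry and the size of the gap. The construction of Theorem~\ref{thm:rpbound} reaches $5$ only by coupling two cyclic blocks and using non-symmetric voters to break ties; a pure rotation-closed profile of the naive kind yields only $3$. Finding generators whose rotation-closure simultaneously preserves the coupling, keeps a dominant Hamiltonian cycle of edges, and still admits a consistent metric with ratio $\to 5$ is the heart of the problem, and is presumably why the statement is posed as a conjecture rather than a theorem. As a consistency check on the method, note that for \emph{randomized} weighted-tournament rules the same rotation argument (averaging over the $2n+1$ shifts) forces only expected distortion $\ge \mathrm{avg}_j\,\phi(c_j,d)/\min_j \phi(c_j,d)$, which on the cost vector above tends to $3$ and thus recovers the weaker bound of Theorem~\ref{thm:rand-tourney} rather than $5$.
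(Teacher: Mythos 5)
The statement you are trying to prove is posed in the paper as a conjecture: the paper gives no proof of it, and the only evidence offered is Theorem~\ref{thm:rpbound}, which establishes the bound of $5$ for two particular weighted-tournament rules (Ranked Pairs and Schulze) via a specific coupled-cyclic construction. So there is no proof of record to compare against, and your proposal must stand on its own. It does not: you explicitly leave open the central step (exhibiting a rotation-closed profile admitting a consistent metric with $\max_j\phi(c_j,d)/\min_j\phi(c_j,d)\to 5$), so what you have is a research plan with the hard part missing, and you say as much.

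More importantly, the plan is structurally doomed, for a reason your own closing ``consistency check'' hints at. Your argument never actually uses the hypothesis that $f$ is a weighted-tournament rule: once the \emph{full profile} is invariant under $\pi$ (together with an agent relabeling), the set of consistent metrics is closed under rotation, and the adversary's shift argument gives $\dist(f,\sigma^{(n)})\ge \max_j\phi(c_j,d)/\min_j\phi(c_j,d)$ for \emph{every} deterministic rule $f$. Hence a consistent metric with ratio tending to $5$ on such an instance would prove a lower bound of $5$ for all deterministic rules, contradicting Conjecture~\ref{conj:detrandbound}; equivalently, if a distortion-$3$ deterministic rule exists, then every rotation-invariant profile has $\max_j\phi(c_j,d)/\min_j\phi(c_j,d)\le 3$ for every consistent $d$. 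This is exactly why the plain cyclic profile tops out at $3$ and why the construction in Theorem~\ref{thm:rpbound} must break the symmetry with the voters $v_0,\vpr_0$: the $5{:}1$ gap and profile-level rotation invariance cannot coexist. The property a proof should exploit is different and weaker than profile symmetry, namely that a weighted-tournament rule cannot distinguish two \emph{different} profiles inducing the same weighted-tournament graph $G_t(\sigma)$. That is precisely the mechanism in the proof of Theorem~\ref{thm:rand-tourney} (many profiles all inducing the all-ties graph, each making a different alternative optimal), and a proof of the conjecture would presumably need a family of profiles with identical weighted-tournament graphs such that every possible winner incurs distortion $5$ in some member of the family --- a construction your proposal does not attempt.
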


 Copeland falls in the category of tournament rules (C1 functions), and we know that Copeland, and other similar rules related to the uncovered set \citep{moulin1986choosing}, achieve a worst-case distortion of 5 \citep{anshelevich2015approximating}. In fact, a lower bound of 5 for the worst-case distortion of Copeland is established via an instance where the tournament graph is a 3 node cycle \citep{anshelevich2015approximating}. It therefore follows that the worst-case distortion of any deterministic tournament rule is at least 5 (since such a rule has no way of distinguishing between the 3 nodes). 

\subsection{Randomized tournament/weighted-tournament rules}
We will now turn our attention to randomized social choice rules. The worst-case distortion in this case is at least 2 \citep{anshelevich2016randomized}. Continuing our discussion on tournament and weighted-tournament rules, we show that in the worst-case randomized tournament/weighted-tournament rules do not get close to the above lower bound. We will construct a sequence of instances where any randomized tournament or weighted-tournament rule achieves a distortion of $3$ in the limit.

\begin{theorem} \label{thm:rand-tourney}
Any randomized tournament or weighted-tournament rule has a worst-case distortion of at least 3.
\end{theorem}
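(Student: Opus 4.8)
The plan is to combine a symmetry/averaging argument with an explicit metric, reusing the coupled cyclic construction behind Theorem \ref{thm:rpbound}. The crucial observation is that a tournament rule sees only $G_m(\sigma)$ and a weighted-tournament rule only $G_t(\sigma)$, so its output distribution is a function of that graph alone and is therefore invariant under any relabelling of $\C$ that fixes the graph. I would exploit this through an averaging argument rather than through equivariance directly (which lets me sidestep tie-breaking). Concretely, I would build an instance $\sigma^{(n)}$ whose preference profile is invariant under a group $\Gamma$ of permutations of $\C$ that acts transitively on the alternatives; the coupling of two cyclic permutation sets used in Theorem \ref{thm:rpbound} is designed precisely to produce such an overall cyclic structure on the $2n+1$ alternatives.

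For the averaging step, fix any tournament or weighted-tournament rule $g$ and let $p = g(\sigma^{(n)}) \in \Delta(\C)$ be its (fixed) output distribution. Given any consistent metric $d$, each $\pi \in \Gamma$ yields a ``rotated'' metric $d_\pi$ (relabel alternatives by $\pi$ and agents correspondingly); since $\pi$ fixes the profile, $d_\pi$ is again consistent with $\sigma^{(n)}$, and its optimal social cost equals $\min_{c}\phi(c,d)$. Because $\Gamma$ is transitive, $\frac{1}{|\Gamma|}\sum_{\pi \in \Gamma}\phi(c,d_\pi)$ equals one value $\bar\phi$ (the average of $\phi(\cdot,d)$ over all alternatives) for every $c$, so
\[
\frac{1}{|\Gamma|}\sum_{\pi \in \Gamma} \sum_{c}p(c)\,\phi(c,d_\pi)
= \sum_{c}p(c)\,\bar\phi = \bar\phi .
\]
Hence some $\pi$ achieves expected rule-cost at least $\bar\phi$, while its optimum is $\min_c\phi(c,d)$, so this single $\pi$ certifies $\dist(g,\sigma^{(n)})\ge \bar\phi / \min_c\phi(c,d)$. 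The problem thus reduces to exhibiting one consistent metric on a transitively-symmetric profile whose average cost over alternatives is, in the limit, three times its minimum cost.

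For that metric I would take exactly the three-level assignment $d(v,c)\in\{1,3,5\}$ from the proof of Theorem \ref{thm:rpbound}, restricted to the coupled agents (dropping the two extra voters $v_0,\vpr_0$ that were only needed to pin down a \emph{deterministic} winner). A direct computation gives $\sum_{c\in\C}\phi(c,d) = 6n^2+3n$ over the $2n+1$ alternatives, so $\bar\phi = 3n$, whereas $\phi(c_{2n+1},d)=n$ since every coupled agent is at distance $1$ from $c_{2n+1}$; thus $\bar\phi/\min_c\phi(c,d) \to 3$. The validity of $d$ as a (q-)metric follows from the same quadrilateral-inequality verification as in Theorem \ref{thm:rpbound}, invoking Lemma \ref{thm:tri-eq-quad}.

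The main obstacle is the interface between the two ingredients: I must design the coupled profile so that it is simultaneously (i) invariant under a group acting transitively on $\C$, which powers the averaging step and forces the rule to spread its probability, and (ii) compatible with an \emph{asymmetric} three-level metric whose average-to-minimum cost ratio tends to $3$. These pull in opposite directions, since full rotational symmetry of both profile and metric would force all $\phi(c,d)$ equal and give ratio $1$; the delicate part is therefore to keep the profile transitively symmetric while breaking the metric's symmetry, and to re-verify the quadrilateral inequalities for the resulting metric once the auxiliary voters are removed and the orbit of agents is completed. Confirming transitivity of the coupled profile, by identifying the explicit group $\Gamma$, is the step I expect to require the most care.
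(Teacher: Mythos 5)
Your high-level strategy --- exploit the anonymity of (weighted-)tournament rules via a symmetry/averaging argument, then exhibit one consistent metric whose average cost over the alternatives is three times its minimum --- is the same engine that drives the paper's proof. The gap is in the instance you propose to feed into that engine. The coupled cyclic profile from Theorem \ref{thm:rpbound}, with or without the auxiliary voters $v_0,\vpr_0$, is \emph{not} invariant under any group acting transitively on the $2n+1$ alternatives, and neither is its weighted-tournament graph: only $c_{n+2},\dots,c_{2n+1}$ ever appear as a top choice; the consecutive edges $c_i \to c_{i+1}$ have weight $n-1$ out of $n$, while $c_{2n+1} \to c_1$ is unanimous and $c_1$ (ranked last by $v_1$) has no unanimous out-edge, so no relabelling of $\C$ fixes the profile or the graph while carrying $c_{2n+1}$ to $c_1$. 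Without that transitivity the identity $\frac{1}{|\Gamma|}\sum_{\pi}\phi(c,d_\pi)=\bar\phi$ has nothing to act on, the rotated metrics $d_\pi$ are not even consistent with the instance, and the conclusion genuinely fails: since the graph distinguishes the cheap alternatives from the expensive ones, nothing prevents a weighted-tournament rule from concentrating its probability near $c_{2n+1}$, whose total cost under your metric is the minimum $n$. You flag ``confirming transitivity of the coupled profile'' as the step needing the most care, but it is not a detail to be checked --- it is false for this construction, and repairing it requires a different instance altogether. (Your arithmetic, $\bar\phi = 3n$ against a minimum of $n$, is correct; it is the symmetry premise that is missing.)

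The tension you name at the end (symmetric profile versus asymmetric metric) is real, and it is exactly what the paper's construction resolves. It uses $m+1$ alternatives and $2m$ agents split into two groups $V$ and $U$ whose cyclically shifted orders are arranged so that $w(a,b)=m$ for \emph{every} ordered pair of alternatives: every pairwise contest is an exact tie. The required symmetry then lives entirely in the (weighted-)tournament graph --- which is all a C1/C2 rule can see, so some alternative must receive probability at most $1/m$ and the instance can be relabelled to make that alternative the special one --- while the profile itself remains asymmetric enough to support a consistent metric in which $c^*$ costs $m$ in total and every other alternative costs $3m$. To salvage your write-up you would need to replace the coupled cyclic instance by a profile with all pairwise margins tied and re-verify consistency of your metric against it; as written, the proof does not go through.
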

\begin{proof}
Construct an instance $\sigma^{(m)}$ and a corresponding metric as follows:
There are $m+1$ alternatives given by $\C = \{c^*, c_1, c_2, \ldots, c_m \}$. And there are $2m$ agents given by $\V$, and $\V$ is divided into two groups $V = \{v_1,v_2, \ldots, v_m \}$ and $U = \{u_1,\ldots,u_m\}$. 

$v_1$ has the preference order $c^* \succ_{v_1} c_1 \succ_{v_1} \ldots \succ_{v_1} c_m$.
For $2 \leq i \leq m$, agent $v_i$ has the preference order 
$c^* \succ_{v_i} c_i \succ_{v_i} c_{i+1} \succ_{v_i} \ldots \succ_{v_i} c_m \succ_{v_i} c_1 \succ_{v_i} \ldots c_{i-1}$.

$u_1$ has the preference order $c_m \succ_{u_1} \ldots \succ_{u_1} c_1\succ_{u_1} c^*$.
For $2 \leq i \leq m$, agent $v_i$ has the preference order 
 $c_{i-1} \succ_{u_i} \ldots  c_1 \succ_{u_i} c_m \succ_{u_i} \ldots \succ_{u_i} c_{i+1} \succ_{u_i} c_{i} \succ_{u_i} c^*$.
 
Define a metric $d$ as: $\forall v \in V$, $d(v,c^*) = 0$ and $d(v,c) = 2$, $ \forall c \neq c^*$; and $\forall u \in U$, $d(u,c) = 1$, for all $c \in \C$.  We omit the details, but this is indeed a valid metric (see Figure \ref{fig:proof2-pic} for a graphical illustration).
 
For any given distribution $\vec{x}$ over the alternatives $\C$, we must have some alternative $a$ such that $x_a \le \frac{1}{m}$.
In this instance, we have $w(a,b) = |\{v \in V \cup U :a \succ_v b \}| = m$ for all $a \neq b \in \C$, i.e., $\frac{w(a,b)}{|\V|} = 0.5$. Since the tournament/weighted-tournament graph is completely symmetric (since the edge weight is equal to $0.5$ on all directed edges), we can assume without loss of generality that $c^*= a$ \footnote{Tournament/weighted-tournament rules are inherently anonymous, but the winner in this case will depend on hows ties are broken. We can get around this issue by tailoring the constructed instance appropriately, i.e., swapping the roles of $c^*$ and the chosen winner.}.

The expected cost for $c^*$ is

\begin{align*}
\sum_{v \in V} d(v,c^*) + \sum_{u \in U} d(u,c^*) = m(0) + m(1) = m,
\end{align*}

The expected cost for the distribution $\vec{x}$ is

\begin{align*}
\sum_{c \in \C} x_c \sum_{v \in V} d(v,c) + \sum_{c \in \C} x_c \sum_{u \in U} d(u,c)
= \;  x_{c^*}\sum_{t \in V \cup U}d(t,c^*) + \sum_{c \neq c^*} x_c \sum_{t \in V \cup U} d(t,c)
\end{align*}
Since $d(v,c^*) = 0$ for all $v \in V$, and $d(u,c^*) = 1$ for all $u \in U$, we get $\sum_{t \in V \cup U}d(t,c^*) = m$. For any $c \neq c^*$, $t \in V \cup U$, we defined $d(t,c) = 3$, which implies that $\sum_{t \in V \cup U} d(t,c) = 3m$.

Using the above, the expected cost becomes
\begin{align*}
x_{c^*}(m) + \sum_{c \neq c^*} x_c (3m) &=  x_{c^*}(m) + (1 - x_{c^*})(3m)
=  3m - x_{c^*}(2m)\\
&\ge 3m - \frac{1}{m} (2m) = 3m-2.
\end{align*}

Therefore, the distortion ratio is at least
$(3m - 2)/{m} = 3 - {2}/{m}$
which tends to $3$ as $m \to \infty$ . 
\end{proof} 

\begin{figure}
  \centering
  \def\svgwidth{200pt}
  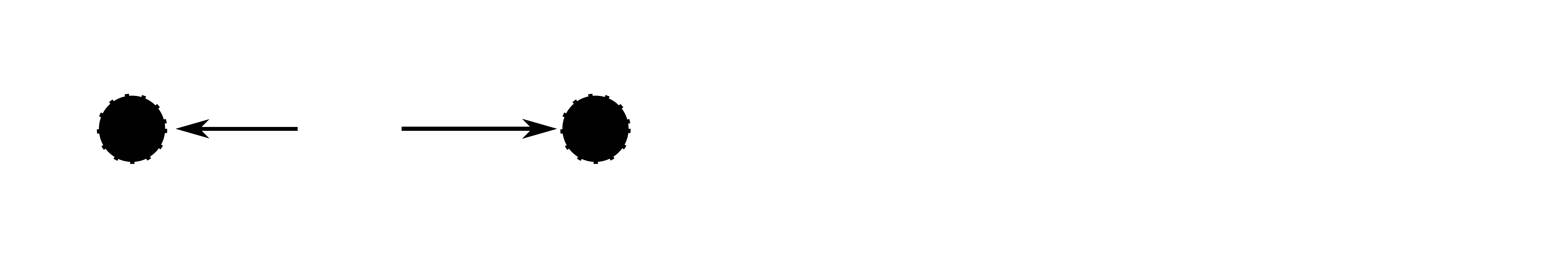
 \vspace{-5pt}
  \caption{Underlying metric in proof of Theorem \ref{thm:rand-tourney}}
  \label{fig:proof2-pic}
\end{figure}

Putting Theorems \ref{thm:rpbound} and \ref{thm:rand-tourney} together, we see that Copeland does at least as well as Ranked Pairs, and randomized tournament/weighted-tournament rules perform no better than Randomized Dictatorship with respect to the worst-case distortion of the sum of costs objective. In the next section, we will show that the upper bounds on the distortion of the sum of costs, for both Copeland and Randomized Dictatorship, hold much more generally.

\subsection{Instance Optimal Distortion}
For any given instance, the single alternative that achieves the least worst-case distortion over all consistent metrics can be found in polynomial time (by solving a polynomial number of linear programs). This follows in a straighforward fashion from the fact that the metric inequalities are linear. The same is true also in the randomized case, perhaps not so directly, in that we can find the optimal distribution over alternatives in polynomial time. The technical details of this claim are provided in Section \ref{subsec:inst-opt-dist} in the appendix. Given the computability of these instance optimal functions, we believe that:

\begin{conjecture}\label{conj:detrandbound}
There exists a deterministic social choice rule that achieves a worst-case distortion of at most 3, and a randomized rule that achieves a worst-case distortion of at most 2.
\end{conjecture}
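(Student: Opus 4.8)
The plan is to make the two bounds concrete by analyzing the \emph{instance-optimal} rules invoked just above the statement. For each instance $\sigma$, let $f^\star(\sigma)$ be an alternative minimizing $\dist(c,\sigma):=\sup_{d\in\rho(\sigma)}\phi(c,d)/\min_{\cpr\in\C}\phi(\cpr,d)$; by the computability discussion this is well-defined. The deterministic half then reduces to a purely existential statement: for \emph{every} instance there is \emph{some} alternative $c$ with $\dist(c,\sigma)\le 3$. The randomized half reduces analogously to showing the instance-optimal distribution has distortion at most $2$.

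First I would fix a candidate winner $c$ and a competitor $w$ and reduce the inequality ``$\phi(c,d)\le 3\,\phi(w,d)$ for all consistent $d$'' to a combinatorial condition on $\sigma$. Partition the voters into those with $c\succ_v w$, who satisfy $d(v,c)\le d(v,w)$ by consistency, and those with $w\succ_v c$, for whom I instead invoke the quadrilateral inequality $d(v,c)\le d(v,w)+d(\vpr,w)+d(\vpr,c)$ with a partner $\vpr$. If I can exhibit a permutation $\tau$ of $\V$ such that for every voter $v$ at least one of $v,\tau(v)$ ranks $c$ above $w$, then choosing $\vpr=\tau(v)$ for each $w$-preferring $v$ gives $d(\vpr,c)\le d(\vpr,w)$, and summing the per-voter bounds (using that $\tau$ is a bijection, so $\sum_v d(\tau(v),w)=\sum_v d(v,w)$) yields exactly $\phi(c,d)\le 3\,\phi(w,d)$. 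Via Lemma \ref{thm:tri-eq-quad} the consistent metrics form a polytope, so I expect LP duality to show this matching condition is not merely sufficient but equivalent to the distortion-$3$ bound.

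The heart of the argument, and the step I expect to be the main obstacle, is showing that a \emph{single} $c$ admits such a matching \emph{against every} competitor $w$ simultaneously. If $c$ were a Condorcet winner the matchings would follow from Hall's theorem applied pairwise, but Condorcet winners need not exist, so a genuinely global existence argument is required. I would look for the right $c$ via a fixed-point or Hall-type argument on the family of ``coverage'' bipartite graphs, or via a monotone potential (iteratively replacing a candidate that fails the condition against some $w$ by a better competitor and arguing the process cannot cycle). Establishing that such an alternative always exists is precisely the content of the conjecture, and this is where the difficulty concentrates.

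For the randomized claim I would pass to the minimax formulation: the instance-optimal randomized distortion equals $\min_{x\in\Delta(\C)}\sup_{d\in\rho(\sigma)}\frac{\sum_{c}x_c\,\phi(c,d)}{\min_{\cpr\in\C}\phi(\cpr,d)}$, which by the minimax theorem equals the value of the corresponding adversary's program. The target is to exhibit a distribution $x$ forcing this value below $2$ (or, dually, to bound every adversary metric). I would try a margin-weighted variant of Randomized Dictatorship, or a distribution supported on the uncovered set, and bound its expected cost by pairing each unit of probability mass with a quadrilateral inequality, much as in the proof of Theorem \ref{thm:rand-tourney}. The obstacle here is the same in spirit as in the deterministic case: controlling the worst-case metric uniformly over the whole simplex rather than for one fixed pair of alternatives.
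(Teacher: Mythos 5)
The statement you are asked to prove is stated in the paper as a \emph{conjecture}: the authors offer no proof, only the supporting observation that the instance-optimal deterministic and randomized rules ($\OPTdet$ and $\OPTrand$ in Section \ref{subsec:inst-opt-dist}) are computable in polynomial time, which makes the conjecture plausible but does not establish it. So there is no ``paper's proof'' to match your attempt against, and the relevant question is only whether your proposal closes the gap. It does not, and you say so yourself: the reduction you give is sound, but the central existence claim is left unproven.

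To give credit where it is due: your reduction of ``$\phi(c,d)\le 3\,\phi(w,d)$ for all consistent $d$'' to the existence of an injection $\tau$ such that for every voter $v$ at least one of $v,\tau(v)$ ranks $c$ above $w$ is correct. For a $w$-preferring voter $v$, the quadrilateral inequality with partner $\vpr=\tau(v)$ gives $d(v,c)\le d(v,w)+2d(\tau(v),w)$, and summing over all voters (using injectivity of $\tau$ so that $\sum_{v}d(\tau(v),w)\le\sum_{v}d(v,w)$) indeed yields the factor $3$. However, the deterministic half of the conjecture requires a single alternative $c$ admitting such a matching against \emph{every} competitor $w$ simultaneously, and your proposal offers only a list of strategies one might try (a Hall-type argument, a monotone potential) without executing any of them; you explicitly concede that ``establishing that such an alternative always exists is precisely the content of the conjecture.'' The randomized half is in even rougher shape: you name candidate distributions but prove nothing about them, and the known lower bound of $3$ for tournament and weighted-tournament rules (Theorem \ref{thm:rand-tourney}) already rules out some of the natural candidates you mention, such as distributions determined only by the (weighted) tournament graph. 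As it stands, the proposal is a research plan, not a proof, and the statement remains open within the paper.
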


\section{Fairness in distortion}\label{sec:fairness}
As mentioned before, we introduce a way of quantifying the fairness of social choice rules by using the concept of approximate majorization within the metric distortion framework. One could hope to adapt other notions of fairness like envy-freeness \citep{lipton2004approximately} and leximin \citep{rawls2009theory, barbara1988maximin} (actually leximax since we are dealing with costs) into the distortion framework. While the standard definition of envy-freeness applies to problems involving the division of goods, and requires no inter-personal comparison of utilities, we could perhaps re-purpose it to our setting by looking at the difference between the largest and smallest costs for any given alternative. This quantity in itself cannot be bounded because it is not scale-invariant -- for whenever it is positive, the metric can be scaled to make the envy unbounded. Unfortunately, using this ``envy" as the objective in the distortion ratio is also fruitless (see Example \ref{eg:fairness} below).

We know that Copeland achieves a worst-case distortion of $5$ for objectives given by $\alpha$-percentiles for $\alpha \in [0.5,1)$ \citep{anshelevich2015approximating}. Here the $\alpha$-percentile cost with respect to any alternative $c \in \C$ corresponds to the smallest value $x \in \{d(v,c) | \; v \in \V\}$ for which $\frac{\# \{v \in \V | \; d(v,c) \leq x\}}{N} > \alpha$. It is easy to see that such an upper bound for all $\alpha \in [0,1)$ would subsume our results. However, for $\alpha \in [0,0.5)$, any deterministic rule has unbounded distortion in the worst case. To see this, consider the following example.

\begin{example}\label{eg:fairness}
Consider two alternatives $c_1,c_2$. There are two sets of agents $U$ and $V$ of size $N/2$ each which have preference $c_1 \succ c_2$ and $c_2 \succ c_1$ respectively. Assume without loss of generality that $c_1$ is picked as the winner by the given social choice rule.

Let $d(v,c_1) = 1 \leq d(v,c_2) = 1$ for all $v \in V$, and $d(u,c_2) = \epsilon < d(u,c_1) = 1$ for all $u \in U$. By invoking Lemma \ref{thm:tri-eq-quad}, we can see that this gives us a consistent metric. 

For $\alpha \in [0,0.5)$, we have that the $\alpha$-percentile costs for $c_1$ and $c_2$ are $1$ and $\epsilon$ respectively. The distortion ratio is then $\frac{1}{\epsilon}$, which goes to $\infty$ as $\epsilon \to 0$.

Now assume $c_2$ is picked as the winner. The maximum envy in this case is $1 - \epsilon$. And in the case of $c_1$ the maximum envy is $0$, leading to an unbounded ratio.
\end{example}

The above example also shows us why a leximax comparison does not work.  Ordering the costs for $c_1$ and $c_2$ in non-decreasing order, let us compare the costs at the first position at which these orders differ. At the $(N/2 +1)$-th position, the cost for $c_1$ is $1$, and that for $c_2$ is $\epsilon$, leading to an unbounded ratio as $\epsilon \to 0$.

\subsection{Bounding the fairness ratio of Copeland rule}
In this section, we show that Copeland achieves a fairness ratio of at most $5$. Besides Copeland, other weighted-tournament rules such as those selecting winners from the minimal covering set, the bipartisan set, banks set, or any other subset of the uncovered set, also achieve a fairness-ratio of at most $5$.

\begin{theorem}\label{thm:copeland-fairness}
For any instance $\sigma$, if $x$ is the Copeland winner, and $z$ is any other alternative, then
\begin{align*}
\fratio(Copeland,\sigma) = \sup_{d \in \rho(\sigma)} \max_{1 \leq k \leq N} \frac{\max_{S \subseteq \V : |S| = k} \sum_{v \in S} d(v,x)}{\max_{S \subseteq \V : |S| = k} \sum_{v \in S} d(v,z)} \leq 5.
\end{align*}
\end{theorem}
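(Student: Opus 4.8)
The plan is to reduce the theorem to a structural fact about the Copeland winner together with a single ``fairness beating lemma,'' and then to combine these in a two-step domination argument. First I would record the structural property: the Copeland winner $x$ lies in the uncovered set, so for every other alternative $z$ either (i) $x$ pairwise-beats $z$, or (ii) there is an intermediate $y$ with $x$ pairwise-beats $y$ and $y$ pairwise-beats $z$ (the ``king'' property of maximum-out-degree vertices in a tournament). This dichotomy is what allows a constant bound to emerge at all, and it is the same structure that drives the distortion-$5$ bound for the sum objective; the new feature here is that everything must be carried through \emph{simultaneously} for every prefix sum $\phi_k$.

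The key building block is a fairness beating lemma: if $a$ pairwise-beats $b$, then $\phi_k(a)\le 3\,\phi_k(b)$ for all $1\le k\le N$. I would prove it using the quadrilateral inequality of Definition \ref{def:qmetric}. Let $W=\{v:a\succ_v b\}$, so $|W|\ge N/2$ and $d(v,a)\le d(v,b)$ on $W$, and let $S^\star$ be the $k$-set attaining $\phi_k(a)$. On $S^\star\cap W$ I bound $d(v,a)\le d(v,b)$ directly; on $S^\star\setminus W$ I apply $d(v,a)\le d(v,b)+d(w,b)+d(w,a)$ with a reference $w\in W$, where $d(w,a)\le d(w,b)$ collapses this to $d(v,a)\le d(v,b)+2d(w,b)$. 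A counting step shows $|S^\star\setminus W|\le N-|W|\le N/2\le|W|$, so the references $w$ can be chosen distinct; since they number at most $k$, the sum of their $b$-costs is at most $\phi_k(b)$. Summing gives $\phi_k(a)\le\sum_{v\in S^\star}d(v,b)+2\,\phi_k(b)\le 3\,\phi_k(b)$. Taking $k=N$ recovers the familiar factor-$3$ bound for the sum objective, a useful sanity check.

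With these in hand, case (i) is immediate: $\phi_k(x)\le 3\,\phi_k(z)\le 5\,\phi_k(z)$. The substance of the proof is case (ii), where I would aim for the decomposition $5=1+2+2$. Fix the optimal $k$-set $S^\star$ for $x$ and classify each $v\in S^\star$ by membership in $A=\{v:x\succ_v y\}$ and $B=\{v:y\succ_v z\}$. For $v\in A\cap B$ the chain $d(v,x)\le d(v,y)\le d(v,z)$ incurs no correction; for $v\in A\setminus B$ one converts $d(v,x)\le d(v,y)$ to $z$-costs via a quadrilateral reference in $B$, paying $2d(w,z)$; for $v\notin A$ one bounds $d(v,x)$ directly by a quadrilateral reference $w$ with $d(w,x)\le d(w,z)$, again paying $2d(w,z)$. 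Each of the two ``legs'' should contribute a correction of at most $2\,\phi_k(z)$, for a total of $\phi_k(z)+2\,\phi_k(z)+2\,\phi_k(z)=5\,\phi_k(z)$.

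The main obstacle is controlling these two correction legs as $2\,\phi_k(z)$ each, i.e.\ guaranteeing that the reference agents for each leg can be chosen \emph{distinct}, so that their $z$-cost sum, over at most $k$ agents, is dominated by $\phi_k(z)$. For the first leg this follows from $|A\setminus B|\le|B|$, which I would verify from $|A|,|B|\ge N/2$ by a short inclusion--exclusion argument. The delicate case is the second leg precisely when $z$ pairwise-beats $x$ (the only situation in which case (i) does not already finish the proof): here the set of admissible references $\{w:d(w,x)\le d(w,z)\}$ can be smaller than $N/2$, and one must show it is still large enough to serve every agent of $S^\star$ lying outside $A$. This is exactly where the uncovered-set property must be used quantitatively --- the fact that $z$ does not cover $x$ forces enough agents to prefer $x$ (to $z$, or along the path $x\to y\to z$) to supply the required distinct references. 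Making this counting tight, and doing so uniformly in $k$, is the crux of the argument; the remaining steps are routine applications of the quadrilateral inequality together with the observation that any sum over at most $k$ distinct agents is at most $\phi_k(z)$.
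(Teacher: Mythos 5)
Your structural dichotomy and your ``fairness beating lemma'' are both correct, and the lemma's proof is essentially the paper's Case (A) verbatim: a one-one map from $S^\star\setminus W$ into $W$, the quadrilateral inequality, and the observation that a sum of $b$-costs over at most $k$ distinct agents is at most $\phi_k(b)$. Case (i) is likewise fine. The genuine gap is the second leg of your case (ii). There you need, for each $v\in S^\star$ with $y\succ_v x$, a distinct reference $w$ satisfying $d(w,x)\le d(w,z)$ --- i.e.\ essentially $w\in G_{xz}$, or $w\in G_{xy}\cap G_{yz}$ if you route ``along the path.'' But case (ii) is invoked precisely when $|G_{xz}|<N/2$, and the king property gives no useful lower bound on either set: with $|G_{xy}|=|G_{yz}|=N/2$ and the two majorities disjoint, both $G_{xy}\cap G_{yz}$ and $G_{xz}$ can be empty while $x$ remains the Copeland winner (three candidates in a cycle, ties broken for $x$, every agent preferring $z$ to $x$). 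Meanwhile $S^\star\setminus A$ can contain up to $\min(k,N/2)$ agents. So the quantitative consequence of ``$z$ does not cover $x$'' that you hope to extract does not exist, and the $1+2+2$ decomposition cannot be closed as stated.

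The fix the paper uses --- available to you with no new ideas, only different accounting --- is to draw the references for \emph{all} of $S^\star$, not just the part outside $A$, from $G_{xy}$ (size at least $N/2$) via a single one-one map $h$. For every $v\in S^\star$ one gets $d(v,x)\le d(v,z)+d(h(v),z)+d(h(v),y)$: for $v\in A$ bound $d(v,x)\le d(v,y)$ first and apply the quadrilateral inequality to $d(v,y)$; for $v\notin A$ apply it to $d(v,x)$ directly and use $d(h(v),x)\le d(h(v),y)$. Summing with $h$ one-one yields $\phi_k(x)\le \phi_k(z)+\phi_k(z)+\phi_k(y)$, and the leftover $\phi_k(y)$ is absorbed by your own beating lemma applied to the pair $(y,z)$, since $y$ pairwise-beats $z$, giving $\phi_k(y)\le 3\phi_k(z)$. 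The decomposition is thus $5=1+1+3$ rather than $1+2+2$: the price of taking all references from $G_{xy}$ is a $d(\cdot,y)$ correction term, but that term aggregates into a single $\phi_k(y)$ which the lemma kills.
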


\begin{proof}
Fix any $k \in \{1,2,\dots, N\}$. 

For any $a,b \in \C$, denote the set of agents that prefer $a$ over $b$ by $G_{ab} = \{ v \in \V : a \succ_v b \}$. 

And for any $t \in \C$, let \[S_t \triangleq \arg \max_{S \subseteq \V : |S| = k} \sum_{v \in S} d(v,t).\] 

Since $x$ is the Copeland winner, we know, from the connection to the uncovered set \citep{moulin1986choosing}, that either
\begin{inparaenum}[(A)]
\item $|G_{xz}| \geq \frac{N}{2}$, or
\item $\exists y \in \C$, such that $|G_{xy}| \geq \frac{N}{2}$ and $|G_{yz}| \geq \frac{N}{2}$
\end{inparaenum}. We deal with each case separately.

\textit{Case (A):} 
Let $g : \V \to \V$ be any one-one map such that if $v \in S_x \setminus G_{xz}$ then $g(v) \in G_{xz}$. One such map exists because $|S_x \setminus G_{xz}| \leq |\V \setminus G_{xz}| \leq \frac{N}{2} \leq |G_{xz}|$. Let $A = S_x \cap G_{xz}$ and $B = S_x \setminus G_{xz}$.

\begin{align*}
\sum_{v \in S_x} d(v,x) &= \sum_{v \in A} d(v,x) +  \sum_{v \in B} d(v,x) \\
& \leq \sum_{v \in A} d(v,z) +  \sum_{v \in B} d(v,x) \\
&\leq \sum_{v \in A} d(v,z) + \sum_{v \in B}\left( d(g(v),x) + d(g(v),z) + d(v,z) \right) \\
&= \sum_{v \in S_x} d(v,z)  +  \sum_{v \in B} \left( d(g(v),x) + d(g(v),z) \right) \\
&\leq \sum_{v \in S_x} d(v,z) +  \sum_{v \in B} 2d(g(v),z) \\
& \leq 3\sum_{v \in S_z} d(v,z).
\end{align*}
In the above sequence, the first inequality follow from the fact that $v \in A \implies v \in G_{xz}$ and for any $v \in G_{xz}$, $d(v,x) \leq d(v,z)$ by definition. The second inequality follows after invoking Condition \ref{eqn:quad-ineq} from Definition \ref{def:qmetric}. The third inequality is true because $g(v) \in G_{xz}$ by definition, and the fourth because for any $S \subseteq \V$ such that $|S| \leq k$, $\sum_{v \in S} d(v,z) \leq \sum_{v \in S_z} d(v,z)$ by the definition of $S_z$.

\textit{Case (B):} 
Let $h : \V \to \V$ be any one-one map such that if $v \in S_x \setminus G_{xy}$ then $h(v) \in G_{xy}$. One such map exists because $|S_x \setminus G_{xy}| \leq |\V \setminus G_{xy}| \leq \frac{N}{2} \leq |G_{xy}|$. Let $P = S_x \cap G_{xy}$ and $Q = S_x \setminus G_{xy}$

\begin{align*}
\sum_{v \in S_x} d(v,x)  &= \sum_{v \in P} d(v,x) +  \sum_{v \in Q} d(v,x)  \\
& \leq  \sum_{v \in P} d(v,y) +  \sum_{v \in Q} d(v,x)  \\
  &\leq  \sum_{v \in P} (d(h(v),y) + d(h(v),z) + d(v,z) ) \\
  & \quad+  \sum_{v \in Q} \left( d(h(v),x) + d(h(v),z) + d(v,z) \right) \\
  &=  \sum_{v \in P} d(h(v),y) +  \sum_{v \in Q}d(h(v),x)  + \sum_{v \in S_x}d(v,z) + \sum_{v \in S_x}d(h(v),z)\\
  &\leq  \sum_{v \in P} d(h(v),y) +  \sum_{v \in Q}d(h(v),y)  + \sum_{v \in S_x}d(v,z) + \sum_{v \in S_x}d(h(v),z) \\
 & =  \sum_{v \in S_x} d(h(v),y) + \sum_{v \in S_x}d(v,z) + \sum_{v \in S_x}d(h(v),z)  \\
 & \leq  \sum_{v \in S_y} d(v,y) + 2 \sum_{v \in S_z} d(v,z) \leq  5\sum_{v \in S_z} d(v,z)  
\end{align*}
In the above sequence, the first inequality follow from the fact that $v \in P \implies v \in G_{xy}$ and for any $v \in G_{xy}$, $d(v,x) \leq d(v,y)$ by definition. The second inequality follows after invoking Condition \ref{eqn:quad-ineq} from Defintion \ref{def:qmetric}. The third inequality is true because $h(v) \in G_{xy}$ by definition, and the fourth because for any $t \in \C$ and $S \subseteq \V$ such that $|S| \leq k$, $\sum_{v \in S} d(v,t) \leq \sum_{v \in S_t} d(v,t)$ by the definition of $S_t$. The last follows from the fact that $\sum_{v \in S_y} d(v,y) \leq 3\sum_{v \in S_z} d(v,z)$ by case (a) above.
\end{proof}

The fact that the inequality in Theorem \ref{thm:copeland-fairness} above is tight follows from the known example \citep{anshelevich2015approximating} in which Copeland achieves a distortion of $5$ with respect to the sum of costs objective.

As mentioned before Copeland also does well with respect to other objectives such as median and $\alpha$-percentiles for $\alpha \in [0.5,1)$ \citep{anshelevich2016randomized}. These functions are not convex and hence do not fall under the category of functions that can be approximated with the help of the fairness ratio. An interesting question is to characterize the entire class of functions for which Copeland achieves a constant factor bound on the distortion.

\subsection{Randomized Dictatorship}
For randomized rules, the connection of the fairness-ratio to convex cost functions does not hold in terms of the expectation variants of the quantities involved. However, the fairness ratio in its own right is a generalization of both max-min fairness and total cost minimization, and is hence worth studying in the case of randomized social choice rules.

Our last result is that Randomized Dictatorship, which achieves a worst-case distortion of $3$, also achieves a fairness ratio of $3$ in expectation.
\begin{theorem}\label{thm:rand-dict-fairness}
For any instance $\sigma$, alternative $y$, and $X$ chosen according to Randomized Dictatorship
\begin{align*}
\sup_{d \in \rho(\sigma)} \max_{1 \leq k \leq N} \frac{\E [\max_{S \subseteq \V : |S| = k} \sum_{v \in S} d(v,X)]}{\max_{S \subseteq \V : |S| = k} \sum_{v \in S} d(v,y)} \leq 3.
\end{align*}
\end{theorem}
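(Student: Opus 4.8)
The plan is to follow the same case structure used in the proof of Theorem~\ref{thm:copeland-fairness}, but adapted to the randomized setting. Fix any $k \in \{1,\dots,N\}$ and any target alternative $y$. For the Randomized Dictatorship winner $X$, we have $X = c$ with probability $|V_c|/N$, where $V_c$ is the set of agents ranking $c$ first. The key observation is that if agent $v$ is a ``dictator'' (i.e. $c = X$ is the alternative $v$ ranks first), then $v$ prefers $X$ to every other alternative, so by consistency $d(v,X) \le d(v,y)$. The plan is to use each dictator $v$ as an anchor to bound the cost $d(u,X)$ of an arbitrary agent $u$ via the quadrilateral inequality (Condition~\ref{eqn:quad-ineq}), writing $d(u,X) \le d(u,y) + d(v,y) + d(v,X) \le d(u,y) + 2d(v,y)$, where the last step uses $d(v,X)\le d(v,y)$.

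The first step is to rewrite the expectation by expanding over dictators: $\E[\phi_k(X,d)] = \sum_{v \in \V} \frac{1}{N} \phi_k(c_v, d)$, where $c_v$ is the alternative $v$ ranks first. For each dictator $v$, I would bound $\phi_k(c_v,d) = \sum_{u \in S_{c_v}} d(u,c_v)$, where $S_{c_v}$ is the $k$-heaviest set for $c_v$. Applying the bound above to each $u \in S_{c_v}$ gives $\phi_k(c_v,d) \le \sum_{u \in S_{c_v}} d(u,y) + 2k\, d(v,y) \le \phi_k(y,d) + 2k\, d(v,y)$, since $S_{c_v}$ has size $k$ and $\sum_{u \in S_{c_v}} d(u,y) \le \phi_k(y,d)$ by the definition of the $k$-heaviest set for $y$. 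Averaging over $v$ with weights $1/N$ then yields
\begin{align*}
\E[\phi_k(X,d)] \le \phi_k(y,d) + \frac{2k}{N}\sum_{v \in \V} d(v,y).
\end{align*}

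The crux is now to control the averaging term $\frac{2k}{N}\sum_{v\in\V} d(v,y)$ by $\phi_k(y,d)$. The natural claim is that $\frac{k}{N}\sum_{v\in\V} d(v,y) \le \phi_k(y,d)$, i.e.\ $k$ times the \emph{average} cost is at most the sum of the $k$ \emph{largest} costs. This holds because the average of the top $k$ costs is at least the overall average: $\phi_k(y,d)/k \ge \frac{1}{N}\sum_{v\in\V} d(v,y)$, which is immediate since the $k$ largest values have mean no smaller than the global mean. Substituting, the averaging term is at most $2\phi_k(y,d)$, giving $\E[\phi_k(X,d)] \le 3\phi_k(y,d)$; taking the supremum over consistent $d$ and the maximum over $k$ completes the bound of $3$.

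The main obstacle I anticipate is being careful with the quadrilateral inequality: it is asymmetric in its arguments (Condition~\ref{eqn:quad-ineq} reads $d(v,c) \le d(v,\cpr)+d(\vpr,\cpr)+d(\vpr,c)$), so I must instantiate it with the dictator $v$ and arbitrary agent $u$ in the correct roles, and only \emph{then} apply dictator-consistency $d(v,X)\le d(v,y)$ to drop one term. A secondary subtlety is that the set $S_{c_v}$ depends on the realized winner $c_v$, so the bound $\sum_{u \in S_{c_v}} d(u,y) \le \phi_k(y,d)$ must invoke the definition of $\phi_k(y,d)$ as a \emph{maximum} over size-$k$ sets (not the particular set $S_y$), which is exactly what licenses replacing $S_{c_v}$ by the optimal set for $y$. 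Everything else is routine averaging, so the entire argument is a single clean case rather than the two-case split needed for the deterministic Copeland bound.
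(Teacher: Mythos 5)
Your proof is correct, and it takes a route that is related to but genuinely cleaner than the paper's. The paper first writes $d(u,c) \le d(u,y) + d(c,y)$ using the triangle inequality on the candidate--candidate distance $d(c,y)$ (which implicitly relies on extending the q-metric to a full metric via Lemma~\ref{thm:tri-eq-quad}), and then separately shows $d(v,y) \ge d(c,y)/2$ for each $v \in V_c$ in order to convert $\sum_c N_c\, d(c,y)$ back into $\sum_v d(v,y)$ and compare it to $\phi_k(y,d)$ via the top-$k$-versus-average inequality. You collapse those two steps into a single application of the quadrilateral inequality anchored at the dictator, $d(u,c_v) \le d(u,y) + d(v,y) + d(v,c_v) \le d(u,y) + 2d(v,y)$, which never mentions candidate--candidate distances and so stays entirely within the q-metric axioms of Definition~\ref{def:qmetric}. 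The final averaging step --- that $k$ times the mean cost is at most the sum of the $k$ largest costs, so the extra term $\frac{2k}{N}\sum_v d(v,y)$ is at most $2\phi_k(y,d)$ --- is the same in both arguments, as is the observation that $\sum_{u\in S_{c_v}} d(u,y) \le \phi_k(y,d)$ because $\phi_k(y,d)$ is a maximum over all size-$k$ sets. What your version buys is a shorter, more self-contained derivation with one fewer intermediate quantity; what the paper's version buys is a reusable half-distance lemma ($d(v,y)\ge d(c,y)/2$ for top-choice agents) that mirrors the standard analysis of Randomized Dictatorship for the plain sum objective. Both yield the same constant for the same structural reason.
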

\begin{proof}
Fix any $k \in \{1,2,\ldots,N\}$, and any alternative $y \in \C$.

For all $c \in \C$, denote the set of agents with $c$ as their top choice by $V_c = \{v \in \V : c \succ_v \cpr, \;\forall \cpr \neq c\} $, and the size of this set as $N_c = |V_c|$. Let the total number of agents be given by $N = |\V|$. 

For any alternative $c \in \C$, denote the set of agents that maximize the sum of $k$ costs for it by $S_c = \arg \max_{S \subseteq \V : |S| = k} \sum_{v \in S} d(v,c)$.

For all $c \in \C$, by the triangle inequality we have
\begin{align}\label{eqn:top}
\nonumber \sum_{v \in S_c} d(v,c) &\leq \sum_{v \in S_c} (d(v,y) + d(c,y)) \\
& \leq k \cdot d(c,y) + \sum_{v \in S_y} d(v,y),
\end{align}
where the second inequality follows by the definition of $S_y$.

For any $c \in \C$, if $v \in V_c$, we have $d(v,c) \leq d(v,y)$ (by the definition of $V_c$), and $d(v,c) + d(v,y) \geq d(c,y)$ (by the triangle inequality), which together imply $d(v,y) \geq \frac{d(c,y)}{2}$. Therefore,

\begin{align}\label{eqn:middle}
\sum_{v \in \V} d(v,y) = \sum_{c \in \C} \sum_{v \in V_c} d(v,y) \geq \sum_{c \in \C} N_c \frac{d(c,y)}{2}.
\end{align}

Consequently, we get

\begin{align}\label{eqn:bottom}
\sum_{v \in S_y} d(v,y) \geq \frac{k}{N}\sum_{v \in \V} d(v,y) \geq \frac{k}{N}\sum_{c \in \C} N_c \frac{d(c,y)}{2},
\end{align}
where the first inequality follows by the definition of $S_y$, and the second from the inequality in \ref{eqn:middle}. We can now bound the expected distortion as follows:
\begin{align*}
&\frac{\E [\max_{S \subseteq \V : |S| = k} \sum_{v \in S} d(v,X)]}{\max_{S \subseteq \V : |S| = k} \sum_{v \in S} d(v,y)} 
=  \frac{\sum_{c \in \C} \frac{N_c}{N} \sum_{v \in S_c} d(v,c)}{\sum_{v \in S_y} d(v,y)} \\
&\leq   \frac{\sum_{c \in \C} \frac{N_c}{N}(k \cdot d(c,y) + \sum_{v \in S_y} d(v,y))}{\sum_{v \in S_y} d(v,y)}  \\
&\leq   1 + \frac{\sum_{c \in \C} \frac{N_c}{N} k \cdot d(c,y)}{\sum_{v \in S_y} d(v,y)} 
\leq   1 + \frac{\frac{k}{N} \sum_{c \in \C} N_c \cdot d(c,y)}{\frac{k}{N}\sum_{c \in \C} N_c \frac{d(c,y)}{2}}  
\leq   3
\end{align*}

We have made use of \ref{eqn:top} and \ref{eqn:bottom} to get the first and third inequalities in the above sequence.
\end{proof}

To make amply clear that the bound on the fairness ratio of randomized does not extend to convex functions, we will now look at an example where the distortion of randomized dictatorship is unbounded when the objective used is the square of the sum of costs.

\begin{example}\label{eg:rand-dict-fairness}
Consider two agents $\{c_1,c_2\}$ and $N_1 + N_2$ agents in total divided into two groups $V_1$ and $V_2$. Assume all of the above are points in $\mathbbm{R}$.
$c_1 = 0$ and $c_2 = 1$. Every agent in $V_1$ is at $0$, and every agent in $V_2$ is at $1$. Also assume that $N_1 \geq N_2$.

Let $C(\vec{x}) = \left( \sum_i x_i \right)^2$. With $C$ as the cost objective, the optimal alternative is $c_1$. The social cost of this alternative is equal to $N_2^2$.

Randomized dictatorship chooses $c_1$ with probability $\frac{N_1}{N_1 + N_2}$ and $c_2$ with probability $\frac{N_2}{N_1 + N_2}$. The expected cost is equal to 
\[N_2^2 \frac{N_1}{N_1 + N_2} + N_1^2 \frac{N_2}{N_1 + N_2} = N_1 N_2. \]

The distortion ratio is $N_1N_2/ N_2^2 = N_1/N_2$ which is unbounded in the limit $N_1 \to \infty$ for any fixed $N_2$.
\end{example}

The fact that the distortion of Randomized Dictatorship with respect to convex cost objectives is unbounded makes the fairness properties of Copeland even more interesting. It seems surprising that a simple rule like Copeland can approximate the optimal alternative over a very general class of cost functions. 

\section{Conclusions}
In this paper, we further the understanding of the performance of social choice rules under metric preferences with respect to the distortion measure. We provide lower bounds on worst-case distortion for deterministic rules such as Ranked Pairs and Schulze, and randomized tournament/weighted-tournament rules. We introduce a framework to study the fairness properties of social choice rules within the distortion framework, and provide low constant-factor upper bounds on the fairness ratios of some well known mechanisms like Copeland and Randomized Dictatorship. In particular, what stands out is that Copeland not only achieves the best known upper bound for deterministic rules, but also simultaneously approximates a large class of cost functions.

\bibliographystyle{plainnat}
\bibliography{metric-distortion_arxiv}

\begin{thebibliography}{29}
\providecommand{\natexlab}[1]{#1}
\providecommand{\url}[1]{\texttt{#1}}
\expandafter\ifx\csname urlstyle\endcsname\relax
  \providecommand{\doi}[1]{doi: #1}\else
  \providecommand{\doi}{doi: \begingroup \urlstyle{rm}\Url}\fi

\bibitem[Anshelevich and Postl(2016)]{anshelevich2016randomized}
Elliot Anshelevich and John Postl.
\newblock Randomized social choice functions under metric preferences.
\newblock \emph{25th International Joint Conference on Artificial
  Intelligence}, 2016.

\bibitem[Anshelevich and Sekar(2016)]{anshelevich2016blind}
Elliot Anshelevich and Shreyas Sekar.
\newblock Blind, greedy, and random: Algorithms for matching and clustering
  using only ordinal information.
\newblock \emph{Association for the Advancement of Artificial Intelligence,
  15th Conference of the}, 2016.

\bibitem[Anshelevich et~al.(2015)Anshelevich, Bhardwaj, and
  Postl]{anshelevich2015approximating}
Elliot Anshelevich, Onkar Bhardwaj, and John Postl.
\newblock Approximating optimal social choice under metric preferences.
\newblock \emph{Association for the Advancement of Artificial Intelligence,
  15th Conference of the}, 2015.

\bibitem[Arya et~al.(2004)Arya, Garg, Khandekar, Meyerson, Munagala, and
  Pandit]{arya2004local}
Vijay Arya, Naveen Garg, Rohit Khandekar, Adam Meyerson, Kamesh Munagala, and
  Vinayaka Pandit.
\newblock Local search heuristics for k-median and facility location problems.
\newblock \emph{SIAM Journal on computing}, 2004.

\bibitem[Barbar{\`a} and Jackson(1988)]{barbara1988maximin}
Salvador Barbar{\`a} and Matthew Jackson.
\newblock Maximin, leximin, and the protective criterion: characterizations and
  comparisons.
\newblock \emph{Journal of Economic Theory}, 1988.

\bibitem[Barbera(2001)]{barbera2001introduction}
Salvador Barbera.
\newblock An introduction to strategy-proof social choice functions.
\newblock \emph{Social Choice and Welfare}, 2001.

\bibitem[Boutilier et~al.(2015)Boutilier, Caragiannis, Haber, Lu, Procaccia,
  and Sheffet]{boutilier2015optimal}
Craig Boutilier, Ioannis Caragiannis, Simi Haber, Tyler Lu, Ariel~D Procaccia,
  and Or~Sheffet.
\newblock Optimal social choice functions: A utilitarian view.
\newblock \emph{Artificial Intelligence}, 2015.

\bibitem[Caragiannis and Procaccia(2011)]{caragiannis2011voting}
Ioannis Caragiannis and Ariel~D Procaccia.
\newblock Voting almost maximizes social welfare despite limited communication.
\newblock \emph{Artificial Intelligence}, 2011.

\bibitem[Caragiannis et~al.(2009)Caragiannis, Kaklamanis, Kanellopoulos, and
  Kyropoulou]{caragiannis2009low}
Ioannis Caragiannis, Christos Kaklamanis, Panagiotis Kanellopoulos, and Maria
  Kyropoulou.
\newblock On low-envy truthful allocations.
\newblock \emph{International Conference on Algorithmic Decision Theory}, 2009.

\bibitem[Chen et~al.(2013)Chen, Lai, Parkes, and Procaccia]{chen2013truth}
Yiling Chen, John~K Lai, David~C Parkes, and Ariel~D Procaccia.
\newblock Truth, justice, and cake cutting.
\newblock \emph{Games and Economic Behavior}, 2013.

\bibitem[Drezner and Hamacher(1995)]{drezner1995facility}
Zvi Drezner and Horst~W Hamacher.
\newblock \emph{Facility location}.
\newblock Springer-Verlag New York, NY, 1995.

\bibitem[Enelow and Hinich(1984)]{enelow1984spatial}
James~M Enelow and Melvin~J Hinich.
\newblock \emph{The spatial theory of voting: An introduction}.
\newblock CUP Archive, 1984.

\bibitem[Feldman et~al.(2016)Feldman, Fiat, and Golomb]{feldman2016voting}
Michal Feldman, Amos Fiat, and Iddan Golomb.
\newblock On voting and facility location.
\newblock \emph{Proceedings of the 2016 ACM Conference on Economics and
  Computation}, 2016.

\bibitem[Filos-Ratsikas and Miltersen(2014)]{filos2014truthful}
Aris Filos-Ratsikas and Peter~Bro Miltersen.
\newblock Truthful approximations to range voting.
\newblock \emph{International Conference on Web and Internet Economics}, 2014.

\bibitem[Fishburn(1977)]{fishburn1977condorcet}
Peter~C Fishburn.
\newblock Condorcet social choice functions.
\newblock \emph{SIAM Journal on applied Mathematics}, 1977.

\bibitem[Gibbard(1973)]{gibbard1973manipulation}
Allan Gibbard.
\newblock Manipulation of voting schemes: a general result.
\newblock \emph{Econometrica: journal of the Econometric Society}, 1973.

\bibitem[Goel and Meyerson(2006)]{goel2006simultaneous}
Ashish Goel and Adam Meyerson.
\newblock Simultaneous optimization via approximate majorization for concave
  profits or convex costs.
\newblock \emph{Algorithmica}, 2006.

\bibitem[Goel et~al.(2001)Goel, Meyerson, and Plotkin]{goel2001approximate}
Ashish Goel, Adam Meyerson, and Serge Plotkin.
\newblock Approximate majorization and fair online load balancing.
\newblock \emph{Symposium on Discrete Algorithms: Proceedings of the twelfth
  annual ACM-SIAM symposium on Discrete algorithms}, 2001.

\bibitem[Kleinberg et~al.(1999)Kleinberg, Rabani, and
  Tardos]{kleinberg1999fairness}
Jon Kleinberg, Yuval Rabani, and {\'E}va Tardos.
\newblock Fairness in routing and load balancing.
\newblock \emph{Foundations of Computer Science, 1999. 40th Annual Symposium
  on}, 1999.

\bibitem[Kumar and Kleinberg(2000)]{kumar2000fairness}
Amit Kumar and Jon Kleinberg.
\newblock Fairness measures for resource allocation.
\newblock \emph{Foundations of Computer Science, 41st Annual Symposium on},
  2000.

\bibitem[Lipton et~al.(2004)Lipton, Markakis, Mossel, and
  Saberi]{lipton2004approximately}
Richard~J Lipton, Evangelos Markakis, Elchanan Mossel, and Amin Saberi.
\newblock On approximately fair allocations of indivisible goods.
\newblock \emph{Proceedings of the 5th ACM conference on Electronic commerce},
  2004.

\bibitem[Moulin(1980)]{moulin1980strategy}
Herv{\'e} Moulin.
\newblock On strategy-proofness and single peakedness.
\newblock \emph{Public Choice}, 1980.

\bibitem[Moulin(1986)]{moulin1986choosing}
Herv{\'e} Moulin.
\newblock Choosing from a tournament.
\newblock \emph{Social Choice and Welfare}, 1986.

\bibitem[Moulin et~al.(2016)Moulin, Brandt, Conitzer, Endriss, Procaccia, and
  Lang]{moulin2016handbook}
Herv{\'e} Moulin, Felix Brandt, Vincent Conitzer, Ulle Endriss, Ariel~D
  Procaccia, and J{\'e}r{\^o}me Lang.
\newblock \emph{Handbook of Computational Social Choice}.
\newblock Cambridge University Press, 2016.

\bibitem[Procaccia and Rosenschein(2006)]{procaccia2006distortion}
Ariel~D Procaccia and Jeffrey~S Rosenschein.
\newblock The distortion of cardinal preferences in voting.
\newblock \emph{International Workshop on Cooperative Information Agents},
  2006.

\bibitem[Procaccia and Wang(2014)]{procaccia2014fair}
Ariel~D Procaccia and Junxing Wang.
\newblock Fair enough: Guaranteeing approximate maximin shares.
\newblock \emph{Proceedings of the fifteenth ACM conference on Economics and
  computation}, 2014.

\bibitem[Rawls(2009)]{rawls2009theory}
John Rawls.
\newblock \emph{A theory of justice}.
\newblock Harvard University Press, 2009.

\bibitem[Satterthwaite(1975)]{satterthwaite1975strategy}
Mark~Allen Satterthwaite.
\newblock Strategy-proofness and arrow's conditions: Existence and
  correspondence theorems for voting procedures and social welfare functions.
\newblock \emph{Journal of economic theory}, 1975.

\bibitem[Schulze(2003)]{schulze2003new}
Markus Schulze.
\newblock A new monotonic and clone-independent single-winner election method.
\newblock \emph{Voting matters}, 2003.

\end{thebibliography}

\section{Appendix}
\subsection{Instance-optimal Distortion}\label{subsec:inst-opt-dist}
For any instance $\sigma$, imagine $c$ is the alternative chosen as the outcome of a social choice rule, and $\cpr$ is the candidate with minimum sum cost with respect to the metric that maximizes distortion. It then follows, very directly from the work of \citep{anshelevich2015approximating}, that the value of the distortion can be found using the following Linear program:

\begin{lp}\label{lp:1}
\begin{equation*}
\begin{aligned}
& A(c,c^\prime,\sigma) \triangleq & \max 
& \; \sum_{v \in \V} d(c,v) \\
& & \text{subject to}
& \; \sum_{v \in \V} d(c^\prime,v) = 1 \\
&&& d \in \rho(\sigma)
\end{aligned}
\end{equation*}
\end{lp}
This is a linear program because $d \in \rho(\sigma)$ if and only if $d$ satisfies the inequalities in Definition 2, and is consistent with $\sigma$. Consistency can also be captured by linear inequalities as mentioned earlier.

We can then define an instance-optimal deterministic choice function $\OPTdet$ which chooses the alternative that minimizes the maximum distortion as 
\begin{equation*}
\OPTdet(\sigma) = \arg \min_{c \in \C} \max_{c^\prime \in \C} A(c,c^\prime,\sigma)
\end{equation*}

The map $\OPTdet$ can be computed in polynomial time, since it involves solving a linear program per pair of candidates. We conjecture that this method always achieves a distortion of not more than 3. We believe that this is an interesting combinatorial problem that is worth looking at.
\begin{conjecture}\label{conj:detalg3bound}
$\OPTdet$ achieves a distortion of not more than 3.
\end{conjecture}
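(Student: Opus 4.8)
The plan is to recast the conjecture as an existence statement and then attack the combinatorially hard case of majority cycles. Observe first that for a fixed alternative $c$, the quantity $\max_{\cpr \in \C} A(c,\cpr,\sigma)$ is exactly the worst-case distortion incurred by deterministically outputting $c$ on $\sigma$: since we take a supremum over metrics, the outer $\max$ over $\cpr$ and the $\sup$ over $d$ commute, so $\max_{\cpr} A(c,\cpr,\sigma) = \sup_{d\in\rho(\sigma)} \max_{\cpr} \phi(c,d)/\phi(\cpr,d)$, which is the distortion of choosing $c$. Because $\OPTdet(\sigma)$ selects the alternative minimizing $\max_{\cpr} A(c,\cpr,\sigma)$, it suffices to prove the following for every instance $\sigma$: there exists an alternative $c$ that \emph{$3$-dominates} every other alternative, where we say $c$ $3$-dominates $\cpr$ if $A(c,\cpr,\sigma)\le 3$, equivalently $\phi(c,d)\le 3\,\phi(\cpr,d)$ for all $d\in\rho(\sigma)$. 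Throughout I would invoke Lemma~\ref{thm:tri-eq-quad} to treat $d$ as a genuine metric and use the triangle inequalities $d(v,c)\le d(v,\cpr)+d(c,\cpr)$ and $d(c,\cpr)\le d(u,c)+d(u,\cpr)$ freely.

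Next I would dispatch the easy case, both to fix the target constant and to isolate the difficulty. If $c$ is a Condorcet winner, i.e.\ $|\{v:c\succ_v\cpr\}|\ge N/2$ for every $\cpr$, then $c$ $3$-dominates all of $\C$: splitting agents by whether they prefer $c$ or $\cpr$ and using consistency on the former group gives $\phi(c,d)\le\phi(\cpr,d)+|\{v:\cpr\succ_v c\}|\cdot d(c,\cpr)$, and bounding $d(c,\cpr)\le 2\,d(v,\cpr)$ for each $v$ preferring $c$ (via the triangle inequality together with consistency) yields $|\{v:\cpr\succ_v c\}|\cdot d(c,\cpr)\le 2\,\phi(\cpr,d)$, since the majority group is at least as large as its complement. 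Hence $\phi(c,d)\le 3\,\phi(\cpr,d)$. This reproves the factor $3$ and shows it coincides with the known lower bound; the whole difficulty is therefore concentrated in instances with a majority cycle, where no Condorcet winner exists and this single-pair charging provably fails.

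For the general case the plan is to obtain a \emph{combinatorial characterization} of $3$-domination and then prove that a universally dominating alternative always exists. Applying LP duality to Linear Program~\ref{lp:1}, the assertion $A(c,\cpr,\sigma)\le 3$ is equivalent to the existence of a nonnegative combination of the consistency constraints and the quadrilateral inequalities of Definition~\ref{def:qmetric} that derives $\phi(c,d)-3\,\phi(\cpr,d)\le 0$. I would interpret this dual certificate as a fractional assignment that routes each unit of $c$-cost onto $\cpr$-costs, and show it is feasible exactly when a suitable bipartite matching exists between agents --- one that is permitted to use each agent's \emph{entire ranking}, through intermediate alternatives, rather than only the pairwise comparison of $c$ against $\cpr$. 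This is the mechanism that can let domination hold even in the absence of a pairwise majority. With such a matching characterization in hand, the goal becomes exhibiting a single alternative $c^\star$ whose certificate exists simultaneously against every $\cpr$. The natural route is a potential/elimination argument: initialize each alternative with a budget equal to its number of first-place votes (which sum to $N$), let agents successively veto their lowest-ranked surviving alternative, and argue that the last survivor $c^\star$ carries, against every opponent $\cpr$, precisely the matching needed for the factor-$3$ charging.

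The main obstacle, and the reason the statement is only conjectured here, is exactly this final existence step in the cyclic regime. One must show that the Hall-type condition underlying the matching characterization holds for some single alternative against all opponents at once; a naive pairwise argument cannot suffice, because majority cycles force every alternative to lose some pairwise contest, so the charging must borrow ``closeness'' through chains of intermediate alternatives and account for it globally without ever double-counting an agent's cost. Verifying feasibility of the dual of Linear Program~\ref{lp:1} and converting it into an \emph{integral} agent-to-agent matching that survives the elimination process --- rather than a merely fractional certificate --- is where I expect essentially all of the technical work to lie.
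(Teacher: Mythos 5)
First, a point of calibration: the paper itself contains no proof of this statement. It is posed explicitly as an open problem (Conjecture~\ref{conj:detalg3bound}), and the surrounding text in Section~\ref{subsec:inst-opt-dist} only establishes that $\OPTdet$ is computable in polynomial time via Linear Program~\ref{lp:1}; the bound of $3$ is merely conjectured. So there is no paper argument to compare yours against, and your proposal must stand on its own as a complete proof --- which it does not. The parts you do carry out are fine: the reduction is correct (since $\OPTdet$ minimizes $\max_{\cpr} A(c,\cpr,\sigma)$ over $c$, it suffices to exhibit in every instance a single alternative $c$ with $\phi(c,d)\le 3\,\phi(\cpr,d)$ for every $\cpr$ and every $d\in\rho(\sigma)$), and your Condorcet-winner computation is valid --- for $v$ with $c\succ_v \cpr$, consistency gives $d(v,c)\le d(v,\cpr)$, hence $d(c,\cpr)\le 2d(v,\cpr)$, and the majority count closes the factor-$3$ charge. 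But that case was already known and is precisely the case the conjecture is \emph{not} about.

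The genuine gap is that everything addressing the cyclic regime is a program rather than an argument, and both of its load-bearing steps are missing. (i) You assert that LP duality turns $A(c,\cpr,\sigma)\le 3$ into a Hall-type matching characterization that routes cost ``through intermediate alternatives,'' but you never write down the dual of Linear Program~\ref{lp:1}, never specify which consistency and quadrilateral constraints are combined with which multipliers, and never prove the claimed equivalence with a bipartite matching between agents; relatedly, converting a fractional dual certificate into an \emph{integral} agent-to-agent matching is not automatic and would itself require a total-unimodularity or direct combinatorial argument that you do not supply. (ii) Far more seriously, the existence of one alternative $c^\star$ whose certificate works \emph{simultaneously} against every opponent --- which is the entire content of the conjecture once a Condorcet winner is absent --- is only gestured at via an elimination process with plurality-score budgets and bottom-of-ranking vetoes: no invariant is stated, and no argument is given that the survivor satisfies the matching condition against each $\cpr$. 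You concede this yourself (``is where I expect essentially all of the technical work to lie''). To be clear, the direction is not unreasonable as a research plan --- charging through chains of intermediate alternatives is exactly the kind of mechanism one would need, since in a majority cycle every alternative loses some pairwise contest and single-pair charging provably cannot give $3$ --- but as submitted, the proposal proves only the known Condorcet case and leaves the conjecture itself untouched.
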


We can define a similar randomized rule $\OPTrand$ that finds the instance-optimal alternative. Surprisingly, this can also be computed in polynomial time, and in what follows we show how.

If, for an instance $\sigma$, a distribution $\{x_c\}_{c \in \C}$ ($\sum_{c \in C} x_c = 1$) over $\C$ is chosen as the social outcome, and an alternative $c^\prime$ maximizes distortion, its value can be found by solving the following Linear Program: 

\begin{lp}\label{lp:randalg}
\begin{equation*}
\begin{aligned}
& A(\vec{x},c^\prime,\sigma) \triangleq & \max 
& \; \sum_{c \in \C} x_c \sum_{v \in \V} d(c,v) \\
& & \text{subject to}
& \; \sum_{v \in \V} d(c^\prime,v) = 1 \\
&&& d \in \rho(\sigma).
\end{aligned}
\end{equation*}
\end{lp}

Let $\Delta(\C)$ be the set of all distributions over $\C$. Then we can define $\OPTrand$ to choose the alternative corresponding to the \emph{minimax} solution as follows:

\begin{align}\label{eqn:randminimax-metricresp}
\OPTrand(\sigma) = \arg \min_{\vec{x} \in \Delta(\C)} \max_{c^\prime \in C} A(\vec{x},c^\prime)
\end{align}

We will impose additional constraints on the metrics we consider. We say that a metric $d$ is normal if and only if $\min_{c^\prime \in C} \sum_{v \in V} d(c^\prime,v) = 1$. We will denote by $\theta(\sigma)$ the set of all normal metrics that are consistent with the instance $\sigma$.

This minimax problem above can be cast into a minimization problem in the following way:
\begin{problem}[Minimax problem]\label{prob:minimax}
\begin{equation*}
\begin{aligned}
& &\mbox{minimize } & \; \gamma  &\\
& &\text{subject to} & \; \sum_{c \in \C} x_c = 1& \\
&& &\sum_{c \in \C} x_c \sum_{v \in \V} d(c,v) \leq \gamma & \forall d \in \theta(\sigma)
\end{aligned}
\end{equation*}
\end{problem}

We can use binary search to do a polynomial time reduction of the above to checking feasibility over $\gamma$. We know that $3$ is an upper bound from the fact that Randomized Dictatorship achieves a worst-case distortion of $3$.
\begin{problem}[Feasibility]\label{prob:feas}
Given an instance $\sigma$ and a $\gamma \in [1, 3]$, is it feasible for the Minimax Problem (Problem \ref{prob:minimax})? If so, find $\vec{x}$ such that $\sum_{c \in \C} x_c = 1$ and $\sum_{c \in \C} x_c \sum_{v \in \V} d(c,v) \leq \gamma$,  $\forall d \in \theta(\sigma)$.
\end{problem}

Given $\gamma \in [1,3]$, denote by $F_\gamma$ the convex feasible region of $\vec{x}$ determined by the following inequalities:
\begin{enumerate}[(a)]
\item $\vec{x} \in \Delta(\C)$,
\item$\forall d \in \theta(\sigma)$, $\sum_{c \in \C} x_c \sum_{v \in \V} d(c,v) \leq \gamma$.
\end{enumerate} 
Note that the set of inequalities given by (b) is uncountable.

To solve Problem \ref{prob:feas}, we can make use of the following separation oracle.
\begin{problem}[Separation Oracle]\label{prob:seporacle}
Given $\gamma$ and $\vec{x}$, either claim that $x \in F_\gamma$, or find a $d \in \theta(\sigma)$ such that $\sum_{c \in \C} x_c \sum_{v \in \V} d(c,v) > \gamma$.
\end{problem}

\begin{definition}[$c^\prime$-normal]\label{def:cnormal}
A metric $d$ is called $c^\prime$-normal if and only if 
\begin{enumerate}[(i)]
\item $\min_{c^\prime \in \C} \sum_{v \in \V} d(c^\prime,v) = 1$, and,
\item $\min_{c \neq c^\prime}d(c^\prime,v) \geq 1$.
\end{enumerate}
\end{definition}
A normal metric $d$ has to be $c^\prime$-normal for some $c^\prime$. To solve Problem \ref{prob:seporacle}, we make use of the following two observations:
\begin{itemize}
\item For a given $\vec{x}$, it belongs to $F$ if $\forall c^\prime \in \C$, $\sum_{c \in \C} x_c \sum_{v \in \V} d(c,v) \leq \gamma$ for all consistent, $c^\prime$-normal metrics. 
\item And if $\sum_{c \in \C} x_c \sum_{v \in V} d(c,v) > \gamma$ for all consistent, normal metrics $d$, then it must be so for some consistent $c^\prime$-normal metric.
\end{itemize}

As a result, the separation oracle problem can be solved by solving the following Linear Program \ref{lp:seporacle} for all $c^\prime \in \C$, for the $\vec{x}$ given to the oracle.

\begin{lp}\label{lp:seporacle}
\begin{equation*}
\begin{aligned}
& A(\vec{x},c^\prime) \triangleq & \max 
& \; \sum_{c \in C} x_c \sum_{v \in V} d(c,v) \\
& & \text{subject to}
& \; \sum_{v \in V} d(c^\prime,v) = 1 \\
&&& d \mbox{ is a consistent, } c^\prime \mbox{-normal metric.}
\end{aligned}
\end{equation*}
\end{lp}

\begin{theorem}
The Minimax Problem (Problem \ref{prob:minimax}) can be solved by solving a polynomial number of linear programs of type \ref{lp:seporacle}.
\end{theorem}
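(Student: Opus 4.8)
The plan is to solve the Minimax Problem by combining a binary search over $\gamma$ with the ellipsoid method, where the family of linear programs of type \ref{lp:seporacle} implements a separation oracle. First I would reduce the optimization to a sequence of feasibility questions. Since we already know $\gamma \in [1,3]$ — the upper bound coming from the fact that Randomized Dictatorship achieves a worst-case distortion of $3$ — a binary search over $\gamma$ with polynomially many bits of precision reduces solving Problem \ref{prob:minimax} to solving polynomially many instances of the Feasibility Problem (Problem \ref{prob:feas}), each asking whether the convex region $F_\gamma$ is nonempty.

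The central difficulty is that $F_\gamma$ is cut out by the uncountable family of constraints $\sum_{c \in \C} x_c \sum_{v \in \V} d(c,v) \leq \gamma$, one for each $d \in \theta(\sigma)$. I would handle this with the ellipsoid method: deciding nonemptiness of a convex body in $\mathbb{R}^{|\C|}$ is polynomial provided one has a polynomial-time separation oracle (Problem \ref{prob:seporacle}). Thus the task reduces to implementing that oracle, namely, given a candidate $\vec{x}$, either certifying $\vec{x} \in F_\gamma$ or exhibiting a violated metric $d$.

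To build the oracle I would exploit the two observations relating normal and $c^\prime$-normal metrics. For a fixed $\vec{x}$, the most-violating normal metric is found by maximizing $\sum_{c \in \C} x_c \sum_{v \in \V} d(c,v)$ over consistent normal metrics; since every normal metric is $c^\prime$-normal for some $c^\prime \in \C$ (Definition \ref{def:cnormal}), this maximum equals $\max_{c^\prime \in \C} A(\vec{x}, c^\prime)$, where each $A(\vec{x}, c^\prime)$ is the optimum of the linear program of type \ref{lp:seporacle}. Hence one oracle call amounts to solving $|\C|$ such LPs and comparing their optima against $\gamma$: if some optimum exceeds $\gamma$, the optimal $d$ attaining it yields a separating hyperplane, and otherwise $\vec{x}$ is declared feasible.

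Putting these pieces together, each ellipsoid iteration invokes the oracle once, at a cost of $|\C|$ LPs of type \ref{lp:seporacle}; the ellipsoid method terminates in polynomially many iterations, and the outer binary search over $\gamma$ contributes only a polynomial factor, so the entire Minimax Problem is solved with a polynomial number of LPs of type \ref{lp:seporacle}. The step I expect to be the main obstacle is justifying the separation oracle rigorously, i.e.\ verifying the two bullet observations that restricting to $c^\prime$-normal metrics loses nothing, and confirming that consistency together with the $c^\prime$-normality conditions can indeed be written as linear inequalities in the entries of $d$ so that each \ref{lp:seporacle} is genuinely a linear program.
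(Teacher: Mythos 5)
Your proposal is correct and follows essentially the same route the paper takes (the paper leaves the proof implicit in the surrounding discussion): binary search over $\gamma \in [1,3]$ reduces the Minimax Problem to polynomially many feasibility questions, each solved by the ellipsoid method using the separation oracle, which in turn is implemented by solving the $|\C|$ linear programs of type~\ref{lp:seporacle} (one per choice of $c^\prime$) and relying on the observation that every normal metric is $c^\prime$-normal for some $c^\prime$. Your closing caveats about rigorously justifying the two bullet observations and the linearity of the $c^\prime$-normality constraints are well placed, since the paper itself asserts these without proof.
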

This leads us to another interesting direction for future research -- analyzing this polynomial time LP-based algorithm could give us a worst-case distortion of $2$. 

\begin{conjecture}\label{conj:randalg2bound}
$\OPTrand$ achieves a distortion of not more than 2.
\end{conjecture}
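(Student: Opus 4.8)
The plan is to solve the minimax optimization by appealing to the polynomial-time equivalence between separation and optimization (the ellipsoid method), organized in two layers. In the outer layer I would binary-search on the scalar $\gamma$ over the interval $[1,3]$ — the upper endpoint being justified by the fact that Randomized Dictatorship already guarantees a worst-case distortion of $3$ — thereby reducing Problem \ref{prob:minimax} to a polynomial number of instances of the Feasibility problem (Problem \ref{prob:feas}), one per value of $\gamma$ probed to the desired precision. For each fixed $\gamma$, the feasible region $F_\gamma$ is a convex subset of the simplex $\Delta(\C)$ cut out by the uncountable family of linear inequalities indexed by $d \in \theta(\sigma)$; since the variable $\vec{x}$ lives in a polynomial-dimensional space, the ellipsoid method finds a point in $F_\gamma$ (or certifies that it is empty) in polynomially many iterations, provided a polynomial-time separation oracle is available.

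The crux is therefore to implement the separation oracle of Problem \ref{prob:seporacle} using programs of type \ref{lp:seporacle}. Given a candidate $\vec{x}$ and threshold $\gamma$, I would solve LP \ref{lp:seporacle} once for each $\cpr \in \C$, maximizing $\sum_{c \in \C} x_c \sum_{v \in \V} d(c,v)$ over all consistent, $\cpr$-normal metrics $d$. This is a genuine linear program because the q-metric constraints of Definition \ref{def:qmetric}, the consistency constraints defining $\rho(\sigma)$, and the $\cpr$-normalization constraints of Definition \ref{def:cnormal} are all linear in the $|\V|\cdot|\C|$ variables $d(v,c)$. If every one of the $|\C|$ optima is at most $\gamma$, I declare $\vec{x} \in F_\gamma$; otherwise the optimal metric $d^\star$ attaining a value exceeding $\gamma$ for some $\cpr$ yields the violated inequality $\sum_{c \in \C} x_c \sum_{v \in \V} d^\star(c,v) > \gamma$, which is exactly the separating hyperplane the ellipsoid method requires.

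The main obstacle, and the step requiring the most care, is establishing correctness of this reduction, namely the two observations asserted just before LP \ref{lp:seporacle}. The essential point is that every normal metric (in the sense used in Problem \ref{prob:minimax}) is $\cpr$-normal for at least one $\cpr$, so the family $\theta(\sigma)$ decomposes as a finite union $\bigcup_{\cpr \in \C}\{d : d \text{ consistent and } \cpr\text{-normal}\}$. Consequently,
\[
\max_{d \in \theta(\sigma)} \sum_{c \in \C} x_c \sum_{v \in \V} d(c,v) \;=\; \max_{\cpr \in \C}\; \max_{d \text{ consistent}, \cpr\text{-normal}} \sum_{c \in \C} x_c \sum_{v \in \V} d(c,v),
\]
so the supremum over the uncountable family is attained and equals the largest of the $|\C|$ LP optima. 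I would verify this decomposition by checking that the defining inequality of $F_\gamma$ holds over all of $\theta(\sigma)$ if and only if it holds on each $\cpr$-normal slice, and conversely that any violation over $\theta(\sigma)$ is already witnessed within some slice; this is where Definition \ref{def:cnormal} must be shown to exhaust the normal metrics.

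Finally, to conclude a genuinely polynomial bound on the number of LP solves, I would confirm the standard ellipsoid bookkeeping: the separating hyperplanes returned have coefficients of polynomially bounded bit-length (they arise from vertices of the polytope of consistent $\cpr$-normal metrics), $F_\gamma$ lies inside the unit simplex and, when nonempty, contains a ball of inverse-polynomial radius, so the ellipsoid method terminates after polynomially many oracle calls. Each oracle call solves $|\C|$ programs of type \ref{lp:seporacle}, and the binary search over $\gamma$ contributes only a polynomial multiplicative factor; multiplying these together gives the claimed polynomial total number of linear programs.
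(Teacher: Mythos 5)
There is a genuine gap, and it is a fundamental one: your proposal proves the wrong statement. Everything you outline --- the binary search over $\gamma \in [1,3]$, the ellipsoid method over the simplex, the separation oracle implemented via the $\cpr$-normal decomposition of $\theta(\sigma)$, the bookkeeping on bit-lengths and inner balls --- is an argument that $\OPTrand(\sigma)$ can be \emph{computed} by solving polynomially many linear programs of type \ref{lp:seporacle}. That is the content of the paper's computability theorem in Section \ref{subsec:inst-opt-dist}, and as a reconstruction of that result your argument is essentially sound. But Conjecture \ref{conj:randalg2bound} asserts something entirely different: that the \emph{value} of the minimax problem is at most $2$ on every instance, i.e., that for all $\sigma$ there exists $\vec{x} \in \Delta(\C)$ with $\sum_{c \in \C} x_c \sum_{v \in \V} d(c,v) \leq 2$ for every consistent normal metric $d$. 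Nothing in your proposal bounds this value. The only quantitative bound that enters your argument is the upper endpoint $3$ of the binary-search interval, which comes from Randomized Dictatorship and certifies only that the optimum is at most $3$; the ellipsoid machinery then locates the optimum wherever it happens to lie in $[1,3]$, without ever narrowing it toward $2$. Being able to compute the instance-optimal distribution efficiently says nothing about how good that optimum is.

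Indeed, the statement you were asked to prove is an open \emph{conjecture} in the paper --- the authors explicitly flag that ``analyzing this polynomial time LP-based algorithm could give us a worst-case distortion of $2$'' as future work, and the paper offers no proof. A genuine proof would have to bound the value of the zero-sum game between $\OPTrand$ and the metric adversary, for instance by exhibiting, for an arbitrary instance, a distribution whose expected cost is at most twice that of every alternative under every consistent metric (matching the known lower bound of $2$ for randomized rules from \citep{anshelevich2016randomized}), or by a duality argument constructing, from any instance where the game value exceeds $2$, a contradiction with the structure of consistent metrics. Your proposal contains no ingredient of this kind, so the central claim remains untouched.
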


\subsection{Distortion of Randomized social choice rules: Metric response vs. Candidate response} \label{subsec:metriccandidate}

\subsubsection{Metric Response}
The notion of distortion we defined for randomized functions can be described in terms of the following game:

We have a sequential game, where the social choice function, denoted by the player $ALG$, plays first choosing a distribution $\{x_c\}_{c \in C}$ over all candidates, followed by an adversarial player $ADV$ who chooses a metric $d$, and automatically the best possible choice of $c^\prime$ for that metric. The payoffs are given by a zero sum game where the utility of $ADV$ is exactly the value of the distortion achieved. The corresponding \emph{minimax} solution is then given by Equation \ref{eqn:randminimax-metricresp}. 

We call the above the \emph{metric response} game, and the value of this game as the \emph{distortion under metric response}.

\subsubsection{Candidate Response}
We could think of variants of the above game where the adversary is more powerful. For example, consider the following:

$ALG$ plays first choosing a distribution $\{x_c\}_{c \in C}$ over all candidates. $ADV$ then follows by choosing a single candidate $c^\prime$.

Let's say that $\sigma$ is the instance at hand. For each draw $c$ from the distribution chosen by $ALG$, and given $c^\prime$, nature picks a metric so that the utility of $ADV$ is $A(c,c^\prime,\sigma)$, and that of $ALG$ is $-A(c,c^\prime,\sigma)$. 

The corresponding \emph{minimax} solution is then given by 
\begin{align}\label{eqn:randminimax-candidateresp}
\min_{\vec{x} \in \Delta(\C)} \max_{c^\prime \in C} A(c,c^\prime,\sigma)
\end{align}

We call the above the \emph{candidate response} game, and the value of this game as the \emph{distortion under candidate response}.

Since the adversary under \emph{candidate response} is stronger than that under \emph{metric response}, the lower bounds for distortion under metric response carry over to the candidate response. For example:
\begin{theorem}
No randomized tournament rule can achieve a distortion ratio of less than $3$ for all problem instances.
\end{theorem}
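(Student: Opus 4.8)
The plan is to obtain this statement as a direct corollary of Theorem~\ref{thm:rand-tourney} by showing that, on every instance, the \emph{candidate-response} game value dominates the \emph{metric-response} game value. First I would make precise the termwise comparison between the two payoffs. Fix any instance $\sigma$, any distribution $\vec{x} \in \Delta(\C)$ played by $ALG$, and any candidate $c^\prime$ selected by $ADV$. In the metric-response game the adversary must commit to a \emph{single} consistent metric $d$, normalized so that $\sum_{v} d(c^\prime,v) = 1$, which is then shared across every draw $c \sim \vec{x}$; its payoff is the value $A(\vec{x},c^\prime,\sigma)$ of Linear Program~\ref{lp:randalg}, namely $\max_{d}\sum_{c \in \C} x_c \sum_{v}d(c,v)$. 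In the candidate-response game, by contrast, nature is allowed to choose a \emph{separate} worst-case metric for each realized draw $c$, so the payoff is $\sum_{c \in \C} x_c\, A(c,c^\prime,\sigma)$, where $A(c,c^\prime,\sigma)$ is the value of Linear Program~\ref{lp:1}.

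The key step is the elementary fact that a maximum of a sum is at most the sum of the maxima. Any metric $d$ that is feasible for $A(\vec{x},c^\prime,\sigma)$ is also feasible for each single-candidate program $A(c,c^\prime,\sigma)$ (the constraints $\sum_v d(c^\prime,v)=1$ and $d \in \rho(\sigma)$ are identical), and hence satisfies $\sum_v d(c,v) \le A(c,c^\prime,\sigma)$ for every $c$. Summing against $\vec{x}$ and maximizing over $d$ gives
\begin{align*}
A(\vec{x},c^\prime,\sigma) \;=\; \max_{d}\sum_{c \in \C} x_c \sum_{v}d(c,v) \;\le\; \sum_{c \in \C} x_c\, A(c,c^\prime,\sigma).
\end{align*}
Taking $\max_{c^\prime}$ on both sides shows that, for the distribution output by \emph{any} fixed randomized rule on $\sigma$, its candidate-response distortion is at least its metric-response distortion; informally, handing the adversary a fresh metric per draw can only help it.

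With domination in hand, I would instantiate the symmetric family of instances $\sigma^{(m)}$ built in the proof of Theorem~\ref{thm:rand-tourney}. On each $\sigma^{(m)}$ every directed edge of the (weighted-)tournament graph has weight exactly $0.5$, so a tournament rule cannot distinguish the alternatives, and the relabeling argument there fixes the low-probability candidate to be $c^*$ and forces the metric-response distortion of any randomized tournament rule to be at least $3 - 2/m$. By the inequality above, the candidate-response distortion of the same rule on $\sigma^{(m)}$ is also at least $3 - 2/m$, which tends to $3$ as $m \to \infty$. Consequently no randomized tournament rule can guarantee a candidate-response distortion strictly below $3$ over all instances, which is the claim.

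The only thing that needs care—rather than a genuine obstacle—is the bookkeeping of what each adversary commits to: I must verify that in candidate response the per-draw metrics are genuinely unconstrained relative to one another (so that the sum-of-maxima upper bound is actually attained by the adversary), while the metric-response adversary is honestly restricted to one shared metric. Once that distinction is pinned down, the domination inequality and the appeal to Theorem~\ref{thm:rand-tourney} are routine.
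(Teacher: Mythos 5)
Your proposal is correct and follows the same route as the paper: the paper's proof is simply ``Follows from Theorem~\ref{thm:rand-tourney},'' justified by the surrounding remark that the candidate-response adversary is stronger than the metric-response one. Your explicit domination inequality $A(\vec{x},c^\prime,\sigma) \leq \sum_{c} x_c\, A(c,c^\prime,\sigma)$ (maximum of a convex combination is at most the convex combination of the maxima, over the common feasible region of the two linear programs) is exactly the missing detail behind that one-line claim, and the appeal to the symmetric instances $\sigma^{(m)}$ then matches the paper.
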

\begin{proof}
Follows from Theorem \ref{thm:rand-tourney}.
\end{proof}

However, these lower bounds need not be tight. Characterizing tighter lower bounds here is an open question.

Another interesting open problem under \emph{candidate response} is characterizing achievable tight upper bounds on distortion. For example, does Randomized Dictatorship always achieve a distortion of at most 3 in this case?

\end{document}